\newtheorem{remark}{\textbf{Remark}}
\newtheorem{lemma}{\textbf{Lemma}}
\newtheorem{corollary}{\textbf{Corollary}}
\newtheorem{proposition}{\textbf{Proposition}}
\renewcommand{\maketag@@@}[1]{\hbox{\m@th\normalsize\normalfont#1}}%
\newcommand{\Rmnum}[1]{\expandafter\@slowromancap\romannumeral #1@}
\begin{document}
	\captionsetup{font={small}}
	\bstctlcite{reference:BSTcontrol}
	
	\title{\fontsize{18 pt}{\baselineskip}\selectfont Wireless Communication with Cross-Linked Rotatable Antenna Array: Architecture Design and Rotation Optimization}
	\author{
		\fontsize{10 pt}{\baselineskip}\selectfont Ailing~Zheng, Qingqing~Wu, Ziyuan~Zheng, Qiaoyan~Peng, Yanze~Zhu, Honghao~Wang, Wen~Chen, Guoying~Zhang
		\vspace{-10 mm}
		\thanks{A. Zheng, Q. Wu, Z. Zheng, Y. Zhu, H. Wang, W. Chen, and G. Zhang are with the Department of Electronic Engineering, Shanghai Jiao Tong University, Shanghai 200240, China (e-mail: (ailing.zheng, qingqingwu, zhengziyuan2024, yanzezhu, hhwang, wenchen, gy\_zhang)@sjtu.edu.cn).}
		\thanks{Q. Peng is with the Department of Electronic Engineering, Shanghai Jiao Tong University, Shanghai 200240, China, and also with the State Key Laboratory of Internet of Things for Smart City and the Department of Electrical and Computer Engineering, University of Macau, Macao SAR, China (email: qiaoyan.peng@connect.um.edu.mo).}
		
	}
	\vspace{-10 mm}
	
	\maketitle
	\vspace{-5 mm}
	\begin{abstract}
		Rotatable antenna (RA) technology can harness additional spatial degrees of freedom by enabling the dynamic three-dimensional orientation control of each antenna. Unfortunately, the hardware cost and control complexity of traditional RA systems is proportional to the number of RAs. To address the issue, we consider a cross-linked (CL) RA structure, which enables the coordinated rotation of multiple antennas, thereby offering a cost-effective solution. To evaluate the performance of the CL-RA array, we investigate a CL-RA-aided uplink system. Specifically, we first establish system models for both antenna element-level and antenna panel-level rotation. Then, we formulate a sum rate maximization problem by jointly optimizing the receive beamforming at the base station and the rotation angles. For the antenna element-level rotation, we derive the optimal solution of the CL-RA array under the single-user case.  
		Subsequently, for two rotation schemes, we propose an alternating optimization algorithm to solve the formulated problem in the multi-user case, where the receive beamforming and the antenna rotation angles are obtained by applying the minimum mean square error method and feasible direction method, respectively. In addition, considering the hardware limitations, we apply the genetic algorithm to address the discrete rotation angles selection problem. Simulation results show that by carefully designing the row-column partition scheme, the performance of the CL-RA architecture is quite close to that of the flexible antenna orientation scheme. Moreover, the CL antenna element-level scheme surpasses the CL antenna panel-level scheme by 25\% and delivers a 128\% performance improvement over conventional fixed-direction antennas.
	\end{abstract}
	
	\begin{IEEEkeywords}
		Rotatable antenna (RA), cross-linked RA (CL-RA) array, antenna element-level rotation optimization, antenna panel-level rotation optimization, array directional gain pattern.
	\end{IEEEkeywords}
	
	\vspace{-5mm}
	\section{Introduction}
	The proliferation of intelligent devices and services is accelerating the circulation of massive data, pushing the frontiers of wireless networks. The forthcoming sixth generation (6G) network is envisioned to support an unprecedented density of connections across diverse application while achieving a leap in performance across throughput, latency, and reliability \cite{Wang2023}. This necessitates fundamental innovations to unlock additional degrees of freedom (DoFs) and achieve superior spectral efficiency without proportionally escalating hardware complexity and power consumption. To meet the  required substantial performance improvement, multiple-input multiple-output (MIMO) has been extensively investigated to enhance the transmission rate and reliability by exploiting spatial DoFs at the transceivers \cite{Bogale2016}. However, the performance gains offered by traditional MIMO technology are achieved at the expense of scaling up the number of antennas, which translates into significantly increased hardware costs, higher power consumption, and greater signal processing complexity \cite{Rusek2013}. Furthermore, simply increasing the number of antennas cannot fully exploit the available spatial DoFs of the MIMO since the positions and orientations of the antennas are fixed once deployed. The fixed deployment of antennas prevents the wireless system from fully exploiting the rich and continuous variations of practical wireless channels in the spatial domain. This inability to adaptively match the continuous channel structure ultimately constrains the achievable diversity and spatial multiplexing performance of MIMO systems \cite{Zhu2024CM}.
	
	To overcome the above limitations, movable antenna (MA) \cite{Zhu2024} and fluid antenna system (FAS) \cite{Wong2021} have been proposed to enable the local movement of antennas in a given region and attracted growing attention in wireless communication. Specifically, the authors of \cite{Zhu2024} developed a field-response model to characterize the general multi-path channel with given transmit and receive regions, where the maximum channel gain achieved by the MA was analyzed. In \cite{Wong2021}, the closed-form expression for the outage probability of a single-antenna FAS was derived over arbitrarily correlated Rayleigh fading channels.
	Generally, MA can flexibly adjust the antennas' positions based on channel state information, thus supporting high resolution and beamforming gains without increasing the number of antennas \cite{Zheng2025MA}. 
	By relocating antennas to favorable positions and reconfiguring the array geometry, the system enhances desired signal power, suppresses interference, and enables adaptive null steering. Such flexible rearrangement also reshapes the channel matrix to improve spatial multiplexing \cite{Ma2024,New2024}, thereby introducing additional spatial DoFs and intelligently reshaping the wireless channel to improve overall communication performance. 
	The unique capabilities of MAs have motivated their integration with diverse advanced applications, such as integrated sensing and communications
	(ISAC) \cite{Guo2024MA,Li2025,Wang2025MA-IRS}, intelligent reflecting surface (IRS) \cite{Zheng2025MA-3,Zheng2025MA-2,GaoMAIRS,Wu2025IRS}, unmanned aerial vehicle communications \cite{Liu2025,Tang2025,Tang2024UAV}, and non-orthogonal multiple access systems \cite{Xiao2025NOMA,Gao2025,Zhou2024NOMA}. 
	
	Unfortunately, despite its superior performance, MA suffers from limited response time and movement speed, which hinders its practical operation in fast-varying channels. Furthermore, conventional MAs only support positional adjustments while the orientation of antennas is fixed, which restricts their ability to fully track dynamic channel conditions. 
	To fully exploit spatial DoFs, six-dimensional (6D) MA architecture was proposed in \cite{Shao2025-1} and \cite{Shao2025}, which can dynamically adjust both its positions and rotations in response to varying channel conditions. This design enhances the network capacity, spatial multiplexing, and interference mitigation \cite{Shao2025-2,Ren2025}. As a lightweight implementation of 6D MA, rotation antenna (RA) with fixed positions has attracted increasing attention due to reduced hardware complexity and compact design \cite{Wu2025RA,Zheng2025-1,Zheng2025-2}. Specifically, RA arrays enable independent three-dimensional (3D) boresight control for each antenna, which in turn allows for precise beamforming toward desired directions and effective suppression of signals from undesired directions.

	Unlike MA that requires additional physical space for movement, RA only needs local rotational adjustment achieved by compact mechanical device. This makes RA more scalable and better compatible with existing wireless infrastructures. Various works have studied the RA-based communication systems in enhancing receive power \cite{Peng2025,Pan2025RA,Zhang2025RA}, securing communication \cite{Liang2025}, and assisting ISAC \cite{Zhou2025, Zheng2025MA-4} by jointly designing the transmit/receive beamforming and the RA boresight directions. 
	These studies have fully demonstrated the significant advantages brought by the RA structure.
	For practical RA deployment, various methods can be applied to realize antenna rotation, such as mechanically-driven (e.g., micro-electromechanical) \cite{Liang2025-1,Dai2025RA, Chang2003}, and electronically-driven (e.g., parasitic elements) \cite{Boyarsky2021,Zheng2025-2} RA. In general, mechanically-driven and electronically-driven methods serve as complementary alternatives for orientation control. The former excels in achieving a broader adjustment range, while the latter, in contrast, provides superior compatibility with current wireless platforms \cite{Zheng2025-3}. Additionally, depending on the requirement of the application, a trade-off between hardware cost and beamforming precision can be made by implementing rotation either at the antenna level or the panel level \cite{Tan2025RA, Wang2025MA}.
	
	Nevertheless, while existing research demonstrates the potential of RAs, it predominantly assumes that each antenna or panel requires an independent driver for full rotational freedom \cite{Zheng2025-2,Pan2025RA,Peng2025,Liang2025,Zhou2025,Zhang2025RA}.
	This approach leads to a critical drawback where hardware complexity, cost, and power consumption scale multiplicatively with the number of antenna elements, hindering practical deployment.
	To overcome this limitation, this paper considers a novel cross-linked (CL) RA architecture. The core idea is to mount antennas on shared rotation tracks along rows and columns, enabling coupled rotation at the row and column levels. This structural innovation reduces the number of required motors from a multiplicative order to an additive scale, dramatically lowering hardware costs and control complexity.
	However, this efficiency gain introduces a new and significant challenge. Specifically, the rotation angle of each antenna in the CL-RA design is no longer independent but is coupled by its row and column rotations. Consequently, steering the boresight of each antenna to precisely align with its intended user becomes a complex constrained problem. The previous boresight designs, which assume independent antenna control, are no longer applicable. 
	Therefore, it is imperative to unveil the performance potential of the CL-RA architecture and develop novel rotation schemes to fully unlock its capabilities under rotational coupling constraints.
	
	Motivated by the above issues, we investigates a CL-RA-enabled uplink system. The main contributions of this work are summarized as follows:
	\begin{itemize}
		\item Firstly, we present a CL-RA architecture and study an uplink multi-user communication system in this paper, where the base station (BS) is equipped with a CL-RA array to provide services to multiple single-antenna users. We establish the system model under antenna element-level rotation and formulate a sum rate maximization problem by jointly optimizing the receive beamforming and the rotation angles of all RAs. Then, we extend the framework to antenna panel-level rotation to further reduce the complexity and cost, presenting the corresponding system model and problem formulation. We also characterize the feasible range of rotation angles for the antenna panel-level rotation scheme.
		\item 
		Second, for the single-user case under antenna element-level rotation, we analytically derive the optimal receive beamforming and rotation angles, which reveals that for the uniform linear array (ULA) setup, the CL-RA architecture can achieve the same performance as the flexible design.
		For the multi-user case under both rotation schemes, we propose an alternating optimization (AO) algorithm. Specifically, the receive beamforming is optimized using the minimum mean square error (MMSE) method, while the rotation angles are updated via the feasible direction method within each AO iteration. Furthermore, to address the practical limitations in rotation flexibility, we extend our study to the discrete rotation angles selection problem, which is solved using a genetic algorithm.
		\item 
		Simulation results validate the effectiveness and convergence of the proposed algorithm. The CL-RA architecture exhibits a negligible performance loss relative to the ideal flexible antenna orientation scheme. Under practical physical constraints, the $2 \times 1$ panels partition scheme is 17\% better than that of the array-wise orientation design. Moreover, the performance of the CL antenna rotation scheme is superior to that of the CL panel rotation scheme, with an improvement of 25\%. Furthermore, the CL-RA architecture enjoys a 128\% higher performance gain than the conventional fixed orientation design. Additionally, the discrete rotation scheme attains an $84\%$ performance gain compared to the fixed-direction baseline.
	\end{itemize}
	
	The rest of this paper is organized as follows.
	Section \ref{System Model} first presents the CL-RA architecture. Then, the system model and the problem formulation for the antenna element- and panel-level rotation are provided. In Section \ref{Performance Analysis}, we analyze the optimal solution for the antenna element-level rotation under the single-user case. 
	Section \ref{Proposed Algorithm} proposes an AO algorithm to solve the formulated problem under the multi-user setup for two rotation schemes. Section \ref{Simulation Results} provides simulation results for performance evaluation and comparison. Finally, Section \ref{Conclusions} concludes this paper.
	
	\emph{Notations:} Vectors and matrices are denoted by boldface lowercase letters and boldface uppercase letters, respectively. $\mathbf{x}^{\mathrm{T}}$, $\mathbf{x}^{\mathrm{H}}$, $\mathbf{x}^{-1}$, and $\Vert \mathbf{x} \Vert$ denote the transpose, Hermitian, inverse, and 2-norm of vector $\mathbf{x}$, respectively. 
	$\mathbb{C}$ and $\mathbb{R}$ denote complex field and real field, respectively. $[x]^{+}=\max\{0,x\}$. $\mathbf{I}_M$ denotes the $M$-dimensional identity matrix. $\nabla$ denotes the differential operator.
	
	\vspace{-1mm}
	\section{System Model and Problem Formulation} \label{System Model}
	\subsection{CL-RA Architecture}
	\begin{figure}[t]
		\centering
		\includegraphics[width=0.43 \textwidth]{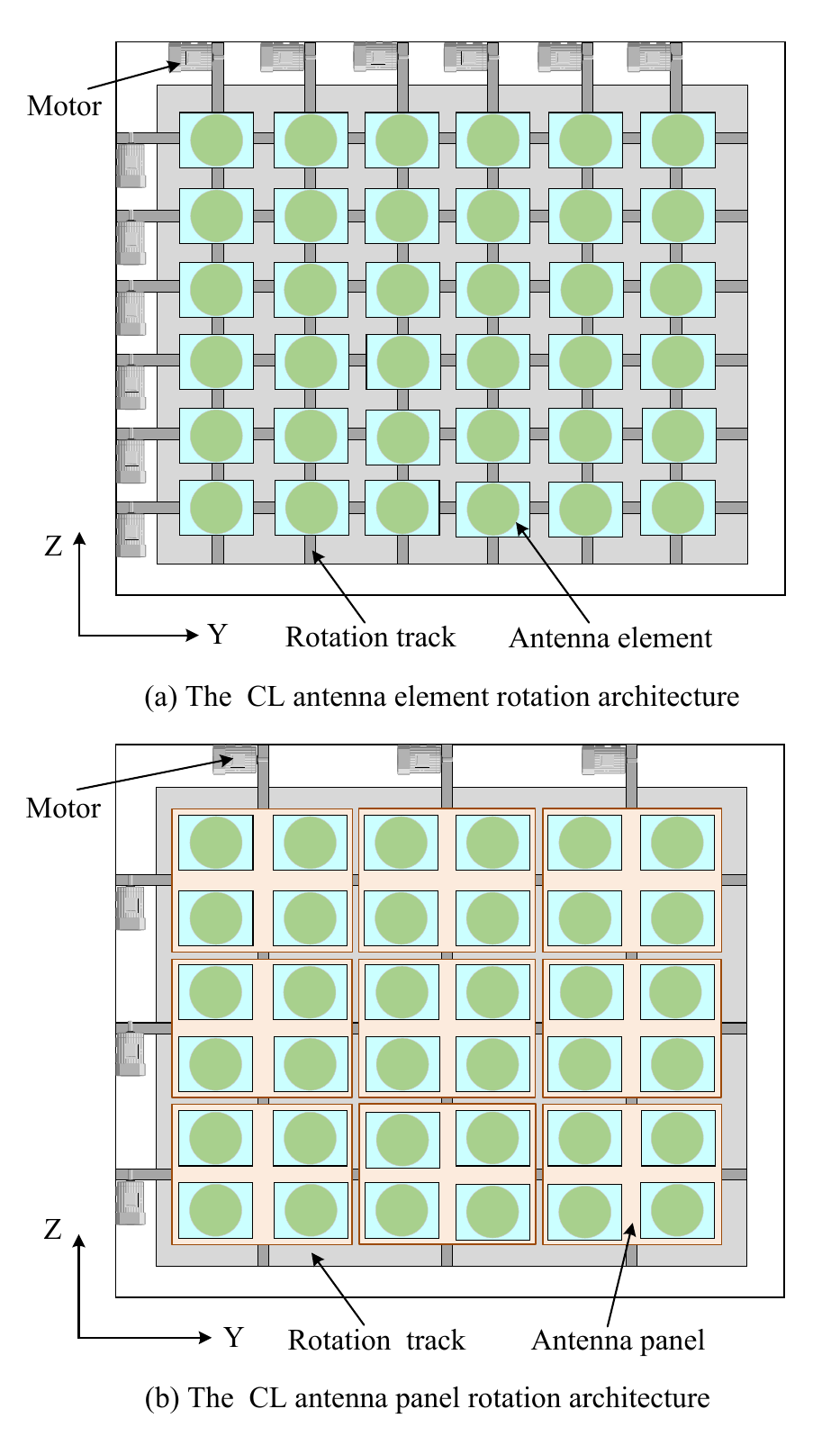}
		\caption{The considered CL-RA architecture.}
		\label{Fig1}
		\vspace{-5mm}
	\end{figure}
	We investigate a novel CL-RA array, as illustrated in Fig. \ref{Fig1}. The CL-RA array consists of multiple horizontal and vertical rotation tracks on the two-dimensional (2D) plane \cite{Zhu2025}. Each rotation antenna/panel is mounted at an intersection of a horizontal track and a vertical track. With the aid of a motor mounted on the upper side of the array, each vertical track can rotate around the vertical axis, thereby steering all antennas/panels in that column collectively in the azimuth plane. Likewise, motors on the left side of the array allow each horizontal track to rotate about the horizontal axis.  
	To prevent the rotation of the horizontal and vertical tracks from interfering with each other, 
	a dual-axis gimbal structure is employed at each intersection of the tracks. The outer gimbal is rigidly attached to the vertical track and rotates about the vertical axis (azimuth). The inner gimbal, holding the antenna/panel, rotates about the horizontal axis (elevation). The horizontal track drives the elevation axis through a cardan shaft that accommodates the relative motion caused by the azimuth rotation. Similarly, the vertical track directly drives the azimuth axis. With this design, each antenna/panel can be independently steered in both azimuth and elevation by the collective motion of the corresponding vertical and horizontal tracks.

	For conventional RA arrays with antenna-wise independent rotation, the number of motors required for 2D rotation is at least twice the total number of antennas.
	For example, an $M \times N$ 2D array with antenna rotation requires $2MN$ motors with the 2D rotation abilities. In contrast, the required number of motors for the CL-RA architectures is equal to the number of horizontal and vertical tracks, i.e., $M+N$. Nevertheless, even this reduced number may still pose challenges in large-scale deployments. To further lower the hardware cost and control complexity, an antenna panel-level rotation strategy can be adopted, where multiple antenna elements, e.g., $Q_b$, are grouped into a single rotatable panel $b$. Under this configuration, when $\sqrt{Q_b}$ is an integer and $M=N$, the required number of motors is reduced to $(M+N)/\sqrt{Q_b}$.
	Consequently, the considered CL-RA architecture offers a scalable and cost-effective solution, particularly for large-scale arrays where $M$ and $N$ are substantial. 
	
	\begin{figure}[t]
		\centering
		\includegraphics[width=0.48 \textwidth]{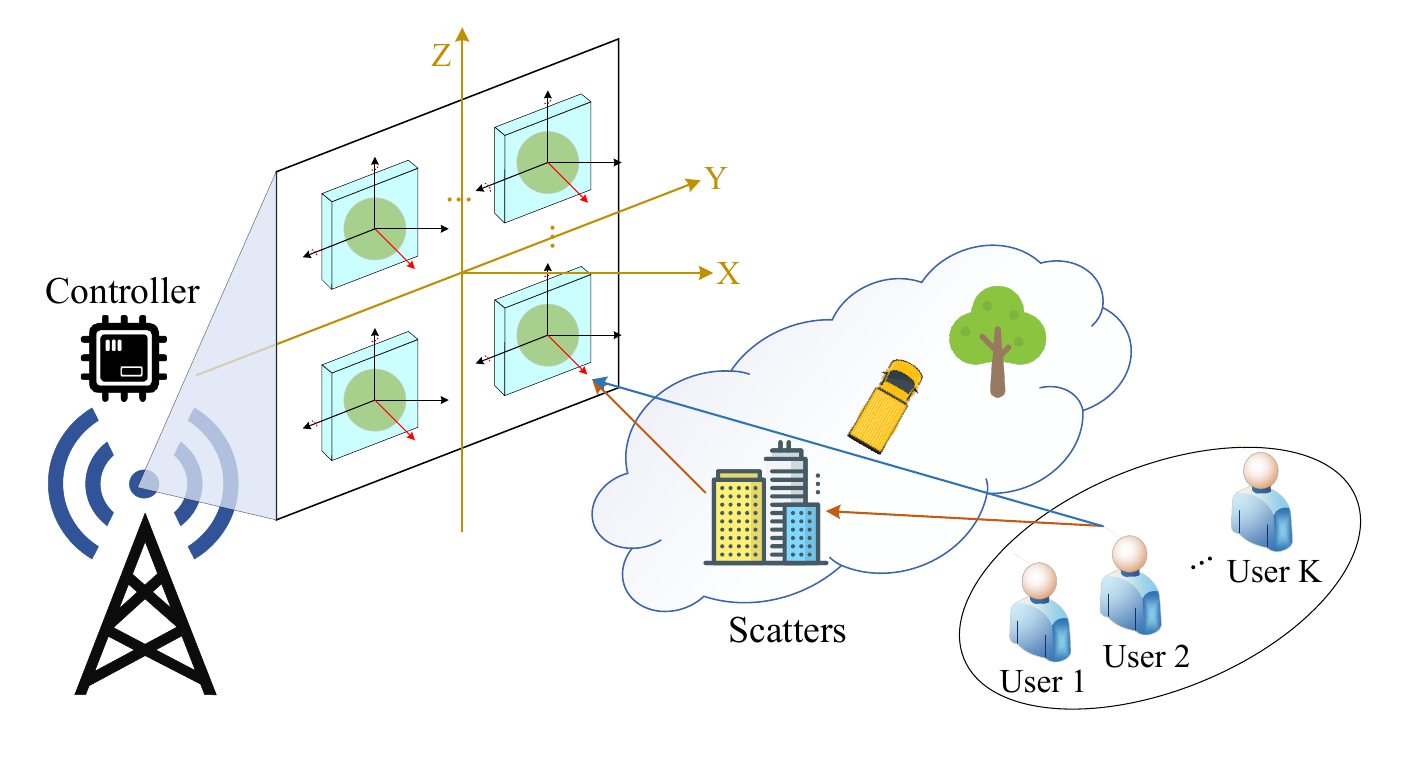}
		\caption{An RA-enabled uplink system.}
		\label{Fig21}
		\vspace{-5mm}
	\end{figure}
	
	\vspace{-3mm}
	\subsection{Antenna Element Rotation-based System}
	\vspace{-1mm}
	We consider an RA-aided uplink system\footnote{While this paper focuses on the uplink system for clarity, by uplink-downlink duality, the proposed algorithmic framework is directly applicable to the downlink communication system.}, where an $M \times N$ RA array is deployed at the BS and the center position of the array is the origin o, as illustrated in Fig. \ref{Fig1} (a). Denote $\mathcal{M}=\{1,2,\ldots, M\}$, $\mathcal{N}=\{1,2,\ldots, N\}$, and $\mathcal{K}=\{1,2,\ldots, K\}$ as the sets of vertical antenna elements, horizontal antenna elements, and users, respectively.
	The position and rotation of the antenna in the $m$-th row  and $n$-th column at the BS, i.e., $a_{m,n}$, can be characterized by $\mathbf{t}_{m,n} = [x_{m,n}, y_{m,n}, z_{m,n}]^\mathrm{T}$ and $\mathbf{u}_{m,n} = [\alpha_{m}, \beta_{n}]$, respectively. The $x_{m,n}$, $y_{m,n}$, and $z_{m,n}$ represent the 3D coordinates of the antenna $a_{m,n}$, satisfying
	\setlength\abovedisplayskip{2pt}
	\setlength\belowdisplayskip{3pt}
	\begin{align}
		&x_{m,n} = 0,y_{m,n} = n\Delta, n = 0, \pm 1, \ldots, \pm \frac{N-1}{2}, \\
		&z_{m,n} = m\Delta, m = 0, \pm 1, \ldots, \pm \frac{M-1}{2}.
	\end{align}
	Here $\alpha_{m} \in (0, 2\pi]$ and $\beta_{n} \in (0, 2\pi]$ denote the rotation angles with respect to the horizontal axis and vertical axis, respectively. 
	Given $\mathbf{u}_{m,n}$, the following rotation matrix can be defined as \cite{Shao2025}
	\begin{align}
		\!\!\!\mathbf{R}(\mathbf{u}_{m,n}) &\!=\! \mathbf{R}_{\alpha_m} \mathbf{R}_{\beta_n} \nonumber \\
		&\!= \!\!\!
		\begin{bmatrix}
			c_{\alpha_m} c_{\beta_n} & c_{\alpha_m} s_{\beta_n} & -s_{\alpha_m} \\
			-s_{\beta_n} & c_{\beta_n} & 0 \\
			s_{\alpha_m} c_{\beta_n} & s_{\alpha_m} s_{\beta_n} & c_{\alpha_m}
		\end{bmatrix}\!, \forall m \in \mathcal{M}, \forall n \in \mathcal{N}, 
	\end{align}
	where $c_x = \cos(x)$ and $s_x = \sin(x)$. $\mathbf{R}_{\alpha_m}$ and $\mathbf{R}_{\beta_n}$ are 
	\begin{align}
		&\!\! \mathbf{R}_{\alpha_m} \!= \!
		\begin{bmatrix}
			c_{\alpha_m} & 0  & -s_{\alpha_m} \\
			0 & 1 & 0 \\
			s_{\alpha_m} & 0 & c_{\alpha_m}
		\end{bmatrix}, 
		\mathbf{R}_{\beta_n}\! = \!
		\begin{bmatrix}
			c_{\beta_n} & s_{\beta_n} & 0 \\
			-s_{\beta_n} & c_{\beta_n} & 0 \\
			0 & 0 & 1
		\end{bmatrix}.
	\end{align}
	
	Each RA element is installed at the cross point of a horizontal track and a vertical track, which is connected to the RF chain via a motor to enable rotation \cite{Zhu2025}. With the aid of $M$ motors mounted on the left side of the array, each row of RAs can collectively rotate around the  horizontal axis, which changes the angle $\alpha_m$ of the antenna $a_{m,n}$. Similarly, with the aid of $N$ motors mounted on the upper side of the array, each vertical track can rotate around the vertical axis, which changes the angle $\beta_n$ of the antenna $a_{m,n}$. 
	Let $\mathbf{u}=[\bm{\alpha},\bm{\beta}]$, where $\bm{\alpha} = [\alpha_1,\alpha_2, \ldots, \alpha_M]$ and $\bm{\beta} = [\beta_1,\beta_2, \ldots, \beta_N]$ denote the vertical and horizontal antenna rotation vectors, respectively. The pointing vector of the antenna $a_{m,n}$ after rotation is given by
	\begin{align}
		&{\mathbf{f}}(\mathbf{u}_{m,n}) = \mathbf{R}(\mathbf{u}_{m,n})\bar{\mathbf{f}}_{m,n}, \forall m \in \mathcal{M}, \forall n \in \mathcal{N},  
	\end{align}
	where $\bar{\mathbf{f}}_{m,n} =[1,0,0]^\mathrm{T} \in \mathbb{R}^{3}$ denotes the initial pointing vector of the antenna $a_{m,n}$.

	\begin{figure}[t]
		\centering
		\includegraphics[width=0.48 \textwidth]{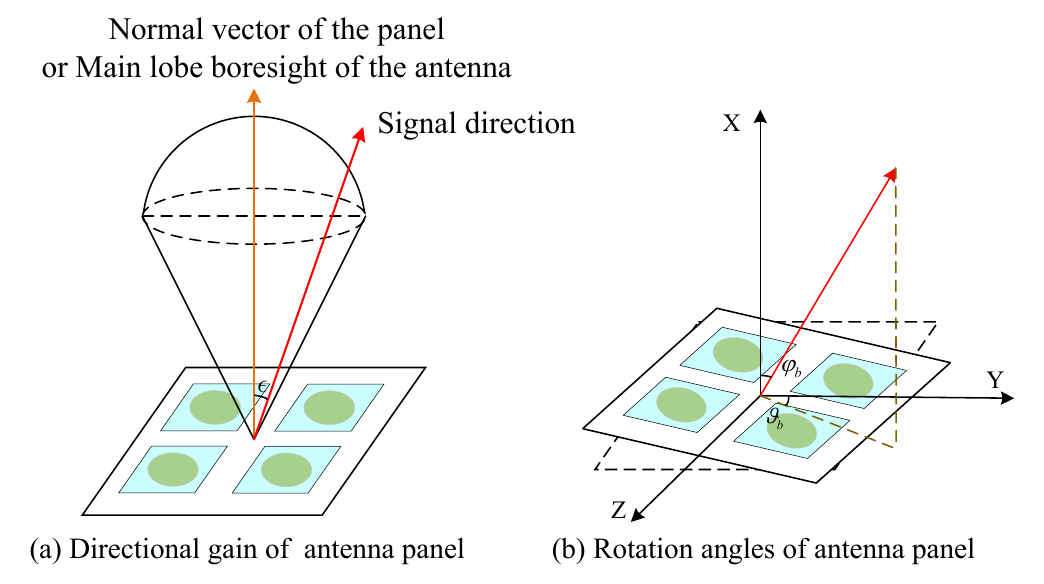}
		\caption{Illustration of directional gain and rotation angles of  antenna panel. If there is only one antenna on the panel, then the normal vector is replaced by the main lobe boresight of the antenna.}
		\label{Fig31}
		\vspace{-5mm}
	\end{figure}
	
	The effective antenna gain for each RA depends on the signal arrival/departure angle and antenna directional gain pattern, as illustrated in Fig. \ref{Fig31}, where $\epsilon$ is the angular offset between the signal direction and the antenna's main-lobe boresight. $\varphi_q$ and $\vartheta_q$ represents the $q$-th antenna's zenith
	angle and azimuth angle, respectively.
	The generic directional gain pattern for each RA is as follows \cite{Peng2025}:
	\begin{align}
		\label{antennaGain}
		G(\epsilon) = 
		\begin{cases}
			G_0\cos^{2p}(\epsilon), & \epsilon \in (0, \frac{\pi}{2}), \\
			0, & \text{otherwise},
		\end{cases}
	\end{align}
	where $\epsilon$ is the elevation angle of any spatial point with respect to the RA's current orientation, $p \ge 0$ determines the directivity of antenna, and $G_0$ is the maximum gain in the boresight direction with $G_0=2(2p+1)$ to satisfy the law of power conservation.
	
	Due to the maneuverability restriction of motors and the need to avoid antenna coupling  between any two RAs, the eccentric angle of the antenna $a_{m,n}$ should be within a specific range, i.e.,
	\begin{align} 
		\label{theta_constarint}
		\cos(\theta_{\max}) \le \cos(\theta_{m,n}) \le 1, \forall m \in \mathcal{M}, \forall n \in \mathcal{N}, 
	\end{align}
	where $\cos(\theta_{m,n}) = {\mathbf{f}}_{m,n}^{\mathrm{T}}\mathbf{e}_x$. Here $\mathbf{e}_x = [1,0,0]^\mathrm{T}$ is the unit vector along the $x$-axis, and $\theta_{\max} \in [0, \pi/2]$.
	
	\subsubsection{Channel Model}
	We consider a narrow-band frequency-flat channel model \cite{Chen2022IRS}.
	Let $\mathbf{v}_k = [x_k,y_k,z_k]^\mathrm{T}$ denote the global coordinates of user $k$. Based on the RA directional gain pattern and Friis transmission equation, the line of sight (LoS) link power gain between antenna $a_{m,n}$ and user $k$ is given by
	\begin{align}
		G_{m,n,k}^{\mathrm{LoS}}(\mathbf{f}_{m,n}) &=\beta_0r_{m,n,k}^{-2}G(\epsilon_{m,n,k}) \nonumber \\ &=\beta_0r_{m,n,k}^{-2}G_0\left(\frac{\mathbf{f}_{m,n}^\mathrm{T}(\mathbf{v}_k-{\mathbf{t}}_{m,n})}{r_{m,n,k}}\right)^{2p},
	\end{align}
	where $\beta_0 = \frac{A}{4\pi }$ is the free-space reference gain constant, $\lambda$ denotes the wavelength, $r_{m,n,k} = \Vert {\mathbf{v}_k - \mathbf{t}}_{m,n}\Vert$ is the distance between antenna $a_{m,n}$ and user $k$, and $\cos(\epsilon_{m,n,k}) =\frac{\mathbf{f}_{m,n}^T(\mathbf{v}_k-{\mathbf{t}}_{m,n})}{r_{m,n,k}}$ represents the cosine of the angle between the boresight $\mathbf{f}_{m,n}$ and the LoS direction to user $k$. The LoS coefficient is the square root of the channel gain with a propagation phase, which is expressed as
	\begin{align}
		h_{m,n,k}^{\mathrm{LoS}}(\mathbf{f}_{m,n})= \sqrt{G_{m,n,k}^{\mathrm{LoS}}(\mathbf{f}_{m,n})}e^{-j\frac{2\pi}{\lambda}r_{m,n,k}}.
	\end{align}
	The channel coefficient captures both the directional antenna gain and the propagation-included phase shift.
	
	We further consider a scattering environment with $D$ spatially distributed clusters, located at $\{\mathbf{s}_d \in \mathbb{R}^3\}^D_{d=1}$. Following the same principle, the non-LoS (NLoS) link power gain between antenna $a_{m,n}$ and cluster $d$ is
	\begin{align}
		G_{m,n,d}^{\mathrm{NLoS}}(\mathbf{f}_{m,n}) &=\beta_0r_{m,n,d}^{-2}G(\epsilon_{m,n,d}) \nonumber \\
		& =\beta_0r_{m,n,d}^{-2}G_0\left(\frac{\mathbf{f}_{m,n}^\mathrm{T}(\mathbf{s}_d-{\mathbf{t}}_{m,n})}{r_{m,n,d}}\right)^{2p},
	\end{align}
	where $r_{m,n,d} =\Vert \mathbf{s}_d-{\mathbf{t}}_{m,n} \Vert$ is the antenna-to-cluster distance, and $\cos(\epsilon_{m,n,d})= \frac{\mathbf{f}_{m,n}^\mathrm{T}(\mathbf{s}_d-{\mathbf{t}}_{m,n})}{r_{m,n,d}}$ denotes the cosine of the angle between the boresight and the direction to cluster $d$. Considering a bi-static scattering model, the NLoS channel coefficient from antenna $a_{m,n}$ to user $k$ is given by
	\begin{align}
		& h_{m,n,k}^{\mathrm{NLoS}}(\mathbf{f}_{m,n}) = \nonumber \\
		&\sum \nolimits_{d=1}^D \sqrt{\frac{G_{m,n,d}^{\mathrm{NLoS}}(\mathbf{f}_{m,n})\sigma_d}{4\pi r^2_{d,k}}} e^{-j\frac{2\pi}{\lambda}(r_{m,n,d}+r_{d,k})+j\chi_d},
	\end{align}
	where $r_{d,k} = \Vert \mathbf{s}_d - \mathbf{u}_k\Vert$ is the cluster-to-user distance, $\sigma_d$ denotes the radar cross section of cluster $d$, and $\chi_d$ is a random phase uniformly distributed over $[0, 2\pi)$. 
	
	Thus, the overall multipath channel between the BS and user $k$ is \cite{Wu2025RA}
	\begin{align}
		\label{channel_h}
		\mathbf{h}_{k} = 
		\mathbf{h}_{k}^{\mathrm{LoS}}(\mathbf{u}) +\mathbf{h}_{k}^{\mathrm{NLoS}}(\mathbf{u}),
	\end{align}
	where 
	\begin{align}
		&\!\! \mathbf{h}_{k}^{\mathrm{LoS}}(\mathbf{u}) = [\hat{\mathbf{h}}_{1,k}^{\mathrm{LoS}}(\mathbf{u}), \hat{\mathbf{h}}_{2,k}^{\mathrm{LoS}}(\mathbf{u}), \ldots, \hat{\mathbf{h}}_{M,k}^{\mathrm{LoS}}(\mathbf{u})]^\mathrm{T},    \\ 
		&\!\! \hat{\mathbf{h}}_{m,k}^{\mathrm{LoS}}(\mathbf{u}) = [h_{m,1,k}^{\mathrm{LoS}}(\mathbf{u}), h_{m,2,k}^{\mathrm{LoS}}(\mathbf{u}), \ldots, h_{m,N,k}^{\mathrm{LoS}}(\mathbf{u})],\\
		&\!\! \mathbf{h}_{k}^{\mathrm{NLoS}}(\mathbf{u}) = [\hat{\mathbf{h}}_{1,k}^{\mathrm{NLoS}}(\mathbf{u}), \hat{\mathbf{h}}_{2,k}^{\mathrm{NLoS}}(\mathbf{u}), \ldots, \hat{\mathbf{h}}_{M,k}^{\mathrm{NLoS}}(\mathbf{u})]^\mathrm{T},   \\
		&\!\! \hat{\mathbf{h}}_{m,k}^{\mathrm{NLoS}}(\mathbf{u}) = [h_{m,1,k}^{\mathrm{NLoS}}(\mathbf{u}), h_{m,2,k}^{\mathrm{NLoS}}(\mathbf{u}), \ldots, h_{m,N,k}^{\mathrm{NLoS}}(\mathbf{u})].
	\end{align}
	
	\vspace{-1mm}
	\subsubsection{Signal Model}
	For the uplink communication, the received signals at the BS is expressed as 
	\begin{align}
		\mathbf{y} =
		\sum\nolimits_{k=1}^K \mathbf{h}_{k}\sqrt{P_k}s_k + \mathbf{n},
	\end{align}
	where $P_k$ and $\mathbf{n} \sim \mathcal{CN}(\mathbf{0}, \sigma^2\mathbf{I}_Q)$ are the transmit power of user $k$ and the additive white Gaussian noise (AWGN), folowing the zero-mean circularly symmetric complex Gaussian (CSCG) distribution with power $\sigma^2$.
	Let $\mathbf{w}_k \in \mathbb{C}^{1 \times Q}$ denote the receive beamforming vector with $\Vert \mathbf{w}_k \Vert =1$. The signal of user $k$ after applying $\mathbf{w}_k$ is given by,
	\begin{align}
		y_k = \mathbf{w}_k\mathbf{h}_{k}\sqrt{P_k}s_k +
		\sum\nolimits_{i=1, i \neq k}^K\mathbf{w}_k \mathbf{h}_{i}\sqrt{P_i}s_i + \mathbf{w}_k\mathbf{n},
	\end{align} 
	where $s_k$ is the information-bearing signal of user $k$.
	Therefore, the SINR of user $k$ is given by
	\begin{align}
		\label{gamma_k}
		\gamma_k = \frac{P_k\vert \mathbf{w}_k\mathbf{h}_{k} \vert^2}{\sum\nolimits_{i=1, i \neq k}^K P_i \vert \mathbf{w}_k\mathbf{h}_{i} \vert^2 + \Vert \mathbf{w}_k\Vert^2\sigma^2}.
	\end{align}
	The data rate of user $k$ is $R_k = \log_2(1+\gamma_k)$.
	
	\vspace{-0mm}
	\subsubsection{Problem Formulation}
	In this paper, we aim to maximize the sum rate of all users by jointly optimizing the receive beamforming matrix $\mathbf{W} = [\mathbf{w}_1,\mathbf{w}_2, \cdots,\mathbf{w}_K]$ and the rotation angle vector $\mathbf{u}$. Thus, the optimization problem is given by
	\begin{subequations}
		\label{P0}
		\begin{eqnarray}
			& \!\!\!\!\!\!\!\!\!\!\!\!\!\!\!\!\! \max  \limits_{\mathbf{w}_k, \mathbf{u}} 
			\label{P0-function}
			& \!\!\!\!\!\!  \sum \limits_{k \in \mathcal{K}} R_k  \\
			\label{P0-C-receive power}  
			& \!\!\!\!\!\!\!\!\!\!\!\!\!\!\!\!\! \mathrm{s.t.} &\!\!\!\!\!\!  \Vert \mathbf{w}_k \Vert  =1, \forall k \in \mathcal{K},\\ 
			\label{P0-C-RA constraint} 
			&&\!\!\!\!\!\! \cos(\theta_{\max}) \le {\mathbf{f}}_{m,n}^{\mathrm{T}}\mathbf{e}_x \le 1, \forall m \in \mathcal{M}, \forall n \in \mathcal{N}, 
		\end{eqnarray}
	\end{subequations}
	where $(\mathrm{\ref{P0-C-RA constraint}})$ denotes the mechanical rotation bounds. Problem $(\mathrm{\ref{P0}})$ is non-convex and challenging to solve because the objective function is quadratic in the beamformers and nonlinear in the orientation variables through the orientation-dependent channels, yielding a nonlinear and non-separable coupling. 
	
	\vspace{-4mm}
	\subsection{Antenna Panel Rotation-based System} \label{Panel Rotation}
	\vspace{-1mm}
	While antenna element-level rotation provides fine control, it incurs prohibitive hardware cost and complexity in large-scale systems due to the multitude of motors required. To address this, we shift to antenna panel-level rotation, where multiple antennas are grouped into a single rotatable panel, as shown in Fig. \ref{Fig1} (b). This strategy significantly reduces the number of motors and control overhead, while still leveraging the panel's inherent array gain and beamforming capabilities.
	Specifically, we assume that the $Q$-antenna BS is divided into $B$ rotatable panels, with each panel comprises $Q_b$ elements, i.e., $ Q= B \times Q_b$ and $B= M \times N$. 
	The set of panels is given by $\mathcal{B}=\{1,2,\ldots, B\}$. Besides, th number of antenna at each panel satisfies $Q_b = M_b \times N_b$. The position and rotation of the $b$-th panel at the BS can be characterized by $\mathbf{q}_b = [x_b, y_b, z_b]^T$ and $\mathbf{u}_{m,n}^b = [\alpha_{m}^b, \beta_{n}^b]$, respectively, where $x_b$, $y_b$, and $z_b$ represent the coordinates of the $b$-th panel center in the global Cartesian coordinate system (CCS).
	Here $\alpha_{m}^b \in (0, 2\pi]$ and $\beta_{n}^b \in (0, 2\pi]$ denote the rotation angles of panel $b$ with respect to the $y$-axis and $z$-axis, respectively. Then, we have the rotation matrix as $\mathbf{R}(\mathbf{u}_{m,n}^b) = \mathbf{R}_{\alpha_{m}^b} \mathbf{R}_{\beta_{n}^b}$.

	Each panel is installed at the cross point of a horizontal track and a vertical track. By tuning the motors, the angles $\alpha_{m}^b$ and $\beta_{n}^b$ can be changed. 
	In addition, each panel's local CCS is denoted by $o'-x'y'z'$, with the surface center serving as the origin $o'$. The $x'$-axis is oriented along the direction of the normal vector of the panel. Let $\bar{\mathbf{r}}_{q}$ denote the position of the $q$-th antenna of the panel in its local CCS. Then, the position of the $q$-th antenna of the $b$-th panel in the global CCS, denoted by ${\mathbf{r}}_{b,q}$, is given by
	\begin{align}
		{\mathbf{r}}_{b,q} = \mathbf{q}_{b} + \mathbf{R}(\mathbf{u}_{m,n}^b)\bar{\mathbf{r}}_{q}.
	\end{align}
	
	\vspace{-2mm}
	\subsubsection{Channel Model}
	Denote the outward normal vector of the $b$-th panel as
	\begin{align}
		\mathbf{n}(\mathbf{u}_{m,n}^b) =\mathbf{R} (\mathbf{u}_{m,n}^b)\bar{\mathbf{n}}, \forall m \in \mathcal{M}, \forall n \in \mathcal{N}, 
	\end{align}
	where $\bar{\mathbf{n}} = [1,0,0]^\mathrm{T}$ represents the normal vector of the $b$-th panel in the local CCS.

	Let $a_{b,q}$ represent the $q$-th antenna of the $b$-th panel, the link power gain between antenna $a_{b,q}$ and user $k$ is given by
	\begin{align}
		\!\!\! \widetilde{G}_{b,q,k}^{\mathrm{LoS}}(\mathbf{n}(\mathbf{u}_{m,n}^b)) &\!\!=\!\!\beta_0r_{b,q,k}^{-2}G(\epsilon_{b,q,k}) \nonumber \\ &\!\!=\!\!\beta_0r_{b,q,k}^{-2}G_0\!\! \left(\!\!\frac{\mathbf{n}(\mathbf{u}_{m,n}^b)^{\mathrm{T}}(\mathbf{v}_k \!-\! {\mathbf{r}}_{b,q})}{r_{b,q,k}}\!\!\right)^{2p},
	\end{align}
	where $r_{b,q,k} = \Vert {\mathbf{v}_k - \mathbf{r}}_{b,q}\Vert$ is the distance between antenna $a_{b,q}$ and user $k$. As illustrated in Fig. \ref{Fig31}, $\cos(\epsilon_{b,q,k}) =\frac{\mathbf{n}(\mathbf{u}_{m,n}^b)^{\mathrm{T}}(\mathbf{v}_k-{\mathbf{r}}_{b,q})}{r_{b,q,k}}$ represents the cosine of the angle between the normal vector $\mathbf{n}(\mathbf{u}_{m,n}^b)$ and the LoS direction to user $k$. Then, the LoS coefficient is expressed as
	\begin{align}
		\widetilde{h}_{b,q,k}^{\mathrm{LoS}}(\mathbf{n}(\mathbf{u}_{m,n}^b))= \sqrt{\widetilde{G}_{b,q,k}^{\mathrm{LoS}}(\mathbf{n}(\mathbf{u}_{m,n}^b))}e^{-j\frac{2\pi}{\lambda}r_{b,q,k}}.
	\end{align}
	For NLoS path, the link power gain between antenna $a_{b,q}$ and cluster $d$ is
	\begin{align}
		\!\!\!\widetilde{G}_{m,n,d}^{\mathrm{NLoS}}(\mathbf{n}(\mathbf{u}_{m,n}^b)) &\!\!=\!\!\beta_0r_{b,q,d}^{-2}G(\epsilon_{m,n,d}) \nonumber \\
		& \!\!=\!\!\beta_0r_{b,q,d}^{-2}G_0\!\!\left(\!\!\frac{\mathbf{n}(\mathbf{u}_{m,n}^b)^{\mathrm{T}}(\mathbf{s}_d \!-\!{\mathbf{r}}_{b,q})}{r_{b,q,d}}\!\!\right)^{2p},
	\end{align}
	where $r_{b,q,d} =\Vert \mathbf{s}_d-{\mathbf{r}}_{b,q} \Vert$ is the antenna-to-cluster distance, and $\cos(\epsilon_{b,q,d})= \frac{\mathbf{n}(\mathbf{u}_{m,n}^b)^{\mathrm{T}}(\mathbf{s}_d-{\mathbf{r}}_{b,q})}{r_{b,q,d}}$ denotes the cosine of the angle between the antenna boresight and the direction to cluster $d$. The NLoS channel coefficient from antenna $a_{b,q}$ to user $k$ is given by
	\begin{align}
		&\widetilde{h}_{b,q,k}^{\mathrm{NLoS}}(\mathbf{n}(\mathbf{u}_{m,n}^b)) = \nonumber \\  
		&\sum \nolimits_{d=1}^D  \sqrt{\frac{\widetilde{G}_{b,q,d}^{\mathrm{NLoS}}(\mathbf{n}(\mathbf{u}_{m,n}^b))\sigma_d}{4\pi r^2_{d,k}}} e^{-j\frac{2\pi}{\lambda}(r_{b,q,d}+r_{d,k})+j\chi_d}.
	\end{align}
	Based on the above, similar to $(\ref{channel_h})$, the channel from the user $k$ to the BS is given by
	\begin{align}
		\widetilde{\mathbf{h}}_k = 
		\widetilde{\mathbf{h}}_k^{\mathrm{LoS}}(\mathbf{u}) +\widetilde{\mathbf{h}}_k^{\mathrm{NLoS}}( \mathbf{u}).
	\end{align}
	\subsubsection{Signal Model}
	The received signals at the BS are given by 
	\begin{align}
		\widetilde{\mathbf{y}} =
		\sum\nolimits_{k=1}^K \widetilde{\mathbf{h}}_k\sqrt{P_k}s_k + \mathbf{n}.
	\end{align}
	Therefore, the data rate of user $k$ is $\widetilde{R}_k = \log_2(1+\widetilde{\gamma}_k)$, where
	\begin{align}
		\label{gamma_k1}
		\widetilde{\gamma}_k = \frac{P_k\vert \mathbf{w}_k\widetilde{\mathbf{h}}_{k} \vert^2}{\sum\nolimits_{i=1, i \neq k}^K P_i \vert \mathbf{w}_k\widetilde{\mathbf{h}}_{i} \vert^2 + \Vert \mathbf{w}_k\Vert^2\sigma^2}.
	\end{align}
	
	In order to avoid mutual signal reflection between different panels, the panel $b$ should satisfy
	\begin{align}
		\mathbf{n}(\mathbf{u}_{m,n}^b)^{\mathrm{T}}(\mathbf{q}_j-\mathbf{q}_b) \le 0,
	\end{align}
	where $j \neq b$ and $j \in \mathcal{B}$.
	In addition, to avoid the $b$-th panel to rotate towards the central processing unit (CPU) of the BS, it should ensure
	\begin{align}
		\mathbf{n}(\mathbf{u}_{m,n}^b)^{\mathrm{T}}\mathbf{q}_b \ge 0.
	\end{align}
	
	\subsubsection{Problem Formulation}
	Based on the above signal model, the sum rate maximization problem is formulated as follows:
	\begin{subequations}
		\label{P1-0}
		\begin{eqnarray}
			& \!\!\!\!\!\!\!\!\!\!\!\!\!\!\!\!\! \max  \limits_{\mathbf{w}_k, \mathbf{u}} 
			\label{P1-0-function}
			& \!\!\!\!\!\!  \sum \limits_{k \in \mathcal{K}} \widetilde{R}_k  \\
			\label{P1-0-C-receive power}  
			& \!\!\!\!\!\!\!\!\!\!\!\!\!\!\!\!\! \mathrm{s.t.} &\!\!\!\!\!\!  \Vert \mathbf{w}_k \Vert  =1, \forall k \in \mathcal{K},\\ 
			\label{P1-0-C-RA constraint} 
			&&\!\!\!\!\!\! 	\mathbf{n}(\mathbf{u}_{m,n}^b)^{\mathrm{T}}(\mathbf{q}_j-\mathbf{q}_b) \le 0, \forall b \in \mathcal{B}, \forall j \in \mathcal{B}, j \neq b,\\
			\label{P1-0-C-RA signal leakage} 
			&&\!\!\!\!\!\!
			\mathbf{n}(\mathbf{u}_{m,n}^b)^{\mathrm{T}}\mathbf{q}_b \ge 0, \forall b \in \mathcal{B},
		\end{eqnarray}
	\end{subequations}
	where constraint $(\mathrm{\ref{P1-0-C-RA constraint}})$ avoids mutual signal reflection between any two panels. Constraint $(\mathrm{\ref{P1-0-C-RA signal leakage}})$ prevents each panel from rotating towards the CPU which causes signal blockage. Problem $(\mathrm{\ref{P1-0}})$ is challenging to solve because the objective function is non-concave, as well as constraints $(\mathrm{\ref{P1-0-C-RA constraint}})$ and $(\mathrm{\ref{P1-0-C-RA signal leakage}})$ are non-linear. 
	
	\subsubsection{Feasible Rotation Angle Range Analysis}
	Note that the position of each panel at the BS is fixed. Given the practical physical constraints $(\mathrm{\ref{P1-0-C-RA constraint}})$ and $(\mathrm{\ref{P1-0-C-RA signal leakage}})$, the feasible rotation angles set of panels is much smaller than that of a single panel. Specifically, for a single panel, its vertical and horizontal rotation angles can be flexibly adjusted to obtain the best performance. However, when an array is divided into several panels, the panels need to satisfy the practical physical constraints among themselves. 
	\begin{proposition}
		\label{pro2}
		\rm{To meet the physical constraints, the panel $b$ should ensure:
			\begin{align}
				\label{Constr1}
				& -s_{\beta_n}(n_j-n)+s_{\alpha_m} c_{\beta_n}(m_j-m) \le 0, \\
				\label{Con_self}
				&  s_{\alpha_m} c_{\beta_n}m \ge s_{\beta_n}n,
			\end{align}
			where $\forall j \neq b, j \in \mathcal{B}$ and $j = (m_j-1)N+n_j$ and $b=(m-1)N+n$.}
	\end{proposition}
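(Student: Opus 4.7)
The plan is to prove the proposition by direct computation: evaluate the two geometric constraints in $(\ref{P1-0-C-RA constraint})$ and $(\ref{P1-0-C-RA signal leakage})$ component-wise using the explicit form of the rotation matrix $\mathbf{R}(\mathbf{u}_{m,n}^b)$ and the known coplanar panel positions, and show that after cancellation the two resulting scalar inequalities coincide with $(\ref{Constr1})$ and $(\ref{Con_self})$.

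First, I would write out the panel normal vector in closed form. Since $\bar{\mathbf{n}}=[1,0,0]^{\mathrm{T}}$ and $\mathbf{R}(\mathbf{u}_{m,n}^b)=\mathbf{R}_{\alpha_m}\mathbf{R}_{\beta_n}$, the vector $\mathbf{n}(\mathbf{u}_{m,n}^b)$ is simply the first column of $\mathbf{R}(\mathbf{u}_{m,n}^b)$, namely
\begin{equation}
\mathbf{n}(\mathbf{u}_{m,n}^b)=\bigl[c_{\alpha_m} c_{\beta_n},\;-s_{\beta_n},\;s_{\alpha_m} c_{\beta_n}\bigr]^{\mathrm{T}}.
\end{equation}
Next, I would specialise the panel positions. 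Because the CL-RA array lies in the $y$-$z$ plane with $x$-coordinate equal to zero, the panel centred at index $(m,n)$ satisfies $\mathbf{q}_b=[0,\,n\Delta,\,m\Delta]^{\mathrm{T}}$, and analogously $\mathbf{q}_j=[0,\,n_j\Delta,\,m_j\Delta]^{\mathrm{T}}$. Therefore the displacement reduces to $\mathbf{q}_j-\mathbf{q}_b=\Delta[0,\,n_j-n,\,m_j-m]^{\mathrm{T}}$, and the $x$-component of $\mathbf{n}(\mathbf{u}_{m,n}^b)$ plays no role in either inner product.

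With these two ingredients in hand, the rest is mechanical. Substituting into $(\ref{P1-0-C-RA constraint})$ yields
\begin{equation}
\mathbf{n}(\mathbf{u}_{m,n}^b)^{\mathrm{T}}(\mathbf{q}_j-\mathbf{q}_b)=\Delta\bigl[-s_{\beta_n}(n_j-n)+s_{\alpha_m} c_{\beta_n}(m_j-m)\bigr]\le 0,
\end{equation}
and since $\Delta>0$ the bracketed quantity must be non-positive, which is exactly $(\ref{Constr1})$. Substituting into $(\ref{P1-0-C-RA signal leakage})$ gives
\begin{equation}
\mathbf{n}(\mathbf{u}_{m,n}^b)^{\mathrm{T}}\mathbf{q}_b=\Delta\bigl[-s_{\beta_n}n+s_{\alpha_m} c_{\beta_n}m\bigr]\ge 0,
\end{equation}
which rearranges to $(\ref{Con_self})$. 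The proof is then complete upon noting that all steps are equivalences.

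There is essentially no deep obstacle here; the only thing to be careful about is bookkeeping. In particular, one must verify that $\bar{\mathbf{n}}=[1,0,0]^{\mathrm{T}}$ picks out the first column of $\mathbf{R}(\mathbf{u}_{m,n}^b)$ with the sign convention declared earlier, and that the index map $b=(m-1)N+n$ is consistent with the panel coordinates $(n\Delta,m\Delta)$ so that the displacement $\mathbf{q}_j-\mathbf{q}_b$ indeed has components $(n_j-n)\Delta$ and $(m_j-m)\Delta$ along the $y$- and $z$-axes respectively. Once those conventions are pinned down, the proposition follows by inspection from the two inner-product identities above.
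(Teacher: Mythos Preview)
Your proposal is correct and follows essentially the same approach as the paper's own proof: compute the normal vector as the first column of $\mathbf{R}(\mathbf{u}_{m,n}^b)$, write the panel displacement as $\Delta[0,\,n_j-n,\,m_j-m]^{\mathrm{T}}$, take the two inner products, and cancel the positive factor $\Delta$. The only difference is presentational---you are slightly more explicit about why the $x$-component drops out and about the index bookkeeping, whereas the paper simply writes down the inner products directly.
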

	\begin{proof}
		Please see Appendix A.
	\end{proof}
	\begin{remark}
		\rm{We first consider a $1 \times N$ configuration. For the panel located on the left side (i.e., $n=-\frac{N-1}{2}$), we have $m_j = 0$ and $n_j > n $. Based on $(\ref{Constr1})$ and $(\ref{Con_self})$, we get $s_{\beta_n} \ge 0$. Thus, $\beta_n \in [0,\pi]$ and $\alpha_m  \in [0,2\pi)$. For the panel located on the middle corner (i.e., $-\frac{N-1}{2}< n < \frac{N-1}{2}$), we have $s_{\beta_n} = 0$. Thus, we obtain $\beta_n = 0$ and $\alpha_m  \in [0,2\pi)$. Then, we consider the $M \times 1$ configuration. For the panel located on the bottom side (i.e., $m=-\frac{M-1}{2}$ ), we have $n_j =0$ and $m_j > m $. Then, we get $s_{\alpha_m}c_{\beta_n} \le 0$. When $c_{\beta_n} = 0$, the outward norm vector is fixed with $\mathbf{n} = [0,\pm 1,0]^{\mathrm{T}}$ for all $\alpha_m$ value in this case. If $c_{\beta_n} > 0$ (i.e., $\beta_n \in [0,\pi/2) \cup (3\pi/2,2\pi]$), $s_{\alpha_m} \le 0$ (i.e., $\alpha_m \in [\pi,2\pi]$). Otherwise, $\beta_n \in (\pi/2,3\pi/2)$ and $\alpha_m \in [0,\pi]$.
			For the panel located on the middle corner (i.e., $-\frac{M-1}{2}< m < \frac{M-1}{2}$), we have $s_{\alpha_m}c_{\beta_n} = 0$. If $c_{\beta_n} = 0$, $\alpha_m \in [0,2\pi)$. However, the outward norm vector is fixed with $\mathbf{n} = [0,\pm 1,0]^{\mathrm{T}}$ for all $\alpha_m$ value in this case. When $s_{\alpha_m} = 0$ (i.e., $\alpha_m = 0$), $\beta_n \in [0,2\pi)$.

			For $M \times N$ configuration, where $M = N > 1$, if a panel is surrounded by adjacent panels on all four sides (e.g., top, bottom, left, and right), its normal vector must align exclusively with the $+x$ direction to meet the physical constraints. 
			For panels located at the four corners, the rotation angles must satisfy specific conditions. For example, if panel $b$ is located in the bottom right corner (i.e., $m=-\frac{M-1}{2}$ and $n=\frac{N-1}{2}$), we have $m_j > m $ and $n_j < n, \forall j \neq b$. Then, the overall rotation angle constraint is given by
			\begin{align}
				&  \!\!\! s_{\alpha_m} c_{\beta_n} \!\le\! -s_{\beta_n}, 
				s_{\beta_n}(n-n_j)\!+\! s_{\alpha_m} c_{\beta_n}(m_j-m) \! \le \! 0.
			\end{align}
			Since there exists $m_j = m$ and $n_j = n$, we can obtain a subset of the above constraints, which is 
			\begin{align}
				s_{\beta_n} \le 0, s_{\alpha_m} c_{\beta_n} \le 0.
			\end{align}
			Then, if $c_{\beta_n} = 0$, the outward norm vector is fixed with $\mathbf{n} = [0,\pm 1,0]^{\mathrm{T}}$ for all $\alpha_m$ value in this case. When $c_{\beta_n} > 0$ (i.e., $\beta_n \in [3\pi/2,2\pi)$ ), we have $s_{\alpha_m} \le 0$, where $\alpha_m \in [\pi,2\pi]$. When $c_{\beta_n} < 0$ (i.e., $\beta_n \in [\pi,3\pi/2)$ ), we have $s_{\alpha_m} \ge 0$, where $\alpha_m \in [0,\pi]$. }
	\end{remark}
	
	According to the above discussions, it can be observed that when the positions of multiple panels are fixed, the feasible rotation angles range of these panels is more restricted than that of a single panel.
	
	\vspace{-4mm}
	\section{Single-User Case} \label{Performance Analysis}
	\vspace{-2mm}
	In this section, we consider the single-user case at the antenna element-level rotation, with $K=1$ and $D=0$. The sum rate maximization problem (\ref{P0}) can be reformulated as 
	\begin{subequations}
		\label{P3-0}
		\begin{eqnarray}
			&\max  \limits_{\mathbf{w}, \mathbf{u}} 
			\label{P3-0-function}
			&  \bar{P}\vert \mathbf{w}\mathbf{h}^{\mathrm{LoS}} \vert^2  \\
			\label{P3-0-C-receive power}  
			&  \mathrm{s.t.} &   \Vert \mathbf{w} \Vert  =1,(\mathrm{\ref{P0-C-RA constraint}}),
		\end{eqnarray}
	\end{subequations}
	where $\bar{P} = P /\sigma^2$.
	When the rotation angle vector $\mathbf{u}$ is fixed, we apply the maximum ratio combining (MRC) beamformer to obtain the optimal receive beamforming vector $\mathbf{w}^{\ast} = \frac{\mathbf{h}^{\mathrm{LoS}}}{\Vert \mathbf{h}^{\mathrm{LoS}} \Vert}$. Then, we have $\bar{P}\vert \mathbf{w}\mathbf{h}^{\mathrm{LoS}} \vert^2 = \bar{P}\Vert \mathbf{h}^{\mathrm{LoS}} \Vert^2 $. By introducing $\mathbf{w}^{\ast}$ into $(\ref{P3-0-function})$, the SNR of the user is given by
	\begin{align}
		\label{single_gamma}
		\gamma = \bar{P} \beta_0 G_0\sum \limits_{m \in M}\sum \limits_{n \in N} \frac{\cos(\epsilon_{m,n})^{2p}}{r_{m,n}^2}.
	\end{align}
	Based on the above, problem $(\ref{P3-0})$ is reformulated as
	\begin{subequations}
		\label{P3-1}
		\begin{eqnarray}
			& \max  \limits_{\mathbf{u}} 
			\label{P3-1-function}
			& \sum \limits_{m \in M}\sum \limits_{n \in N} \frac{\cos(\epsilon_{m,n})^{2p}}{r_{m,n}^2} \\
			\label{P3-1-C-receive power}  
			& \mathrm{s.t.} & (\mathrm{\ref{P0-C-RA constraint}}).
		\end{eqnarray}
	\end{subequations}
	Based on the CL-RA architecture, the antennas on each row have the same rotation angle $\alpha$, and the antennas on each column have the same rotation angle $\beta$. Then, we have the pointing vector of the antenna $a_{m,n}$ as ${\mathbf{f}}(\mathbf{u}_{m,n}) = \mathbf{R}(\mathbf{u}_{m,n})\bar{\mathbf{f}}_{m,n} = [c_{\alpha_m}c_{\beta_n} \ -s_{\beta_n} \ s_{\alpha_m} c_{\beta_n} ]^\mathrm{T}$. In order to make each antenna beam point towards the user, where the 3D position of the user is $\mathbf{v}_0 = [x_0, y_0, z_0]$, the unit vector pointing from antenna $a_{m,n}$ towards the user is
	\begin{align}
		\mathbf{u}_0& =\frac{(x_0-x_{m,n},y_0-y_{m,n},z_0-z_{m,n})}{\sqrt{(x_0-x_{m,n})^2+(y_0-y_{m,n})^2+(z_0-z_{m,n})^2}} \nonumber \\
		&=\frac{(x_0,y_0-y_n,z_0-z_m)}{\sqrt{x_0^2+(y_0-y_n)^2+(z_0-z_m)^2}},
	\end{align}
	where $y_n=n\Delta$ and $z_m=m\Delta$.

	\begin{proposition}
		\label{pro1}
		\rm{For the ULA array, i.e., $1 \times N$, by adjusting the rotation angles, each antenna in the CL-RA architecture can point towards the user, where the rotation angles are given by
			\begin{align}
				\label{cos_alpha1}
				& \alpha_m = \mathrm{\arccos}\left(\frac{x_0}{\sqrt{x_0^2+(z_0-z_m)^2}}\right),\\
				\label{cos_beta1}
				& \beta_n = \mathrm{\arccos}\left(\frac{\sqrt{x_0^2+(z_0-z_m)^2}}{r}\right).
		\end{align}}
	\end{proposition}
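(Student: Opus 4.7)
The plan is to show that under the $1\times N$ ULA, the CL-RA coupling constraints (a single $\alpha_m$ per row and a single $\beta_n$ per column) leave exactly enough freedom to align the boresight of every antenna with the user direction, and to derive the resulting angles in closed form. First I would set up the alignment condition by equating the pointing vector $\mathbf{f}(\mathbf{u}_{m,n})=[c_{\alpha_m}c_{\beta_n},\,-s_{\beta_n},\,s_{\alpha_m}c_{\beta_n}]^{\mathrm{T}}$ with the unit vector $\mathbf{u}_0$ from antenna $a_{m,n}$ to the user already written out in the excerpt. Component-wise this yields, for each $(m,n)$, the three scalar equations $c_{\alpha_m}c_{\beta_n}=x_0/r_{m,n}$, $-s_{\beta_n}=(y_0-y_n)/r_{m,n}$, and $s_{\alpha_m}c_{\beta_n}=(z_0-z_m)/r_{m,n}$, where $r_{m,n}=\|\mathbf{v}_0-\mathbf{t}_{m,n}\|$.

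Second, I would eliminate $c_{\beta_n}$ by taking the ratio of the third and first equations, obtaining $\tan\alpha_m=(z_0-z_m)/x_0$, from which $(\ref{cos_alpha1})$ follows immediately. Crucially, this expression depends on $m$ only (through $z_m$), so the derived $\alpha_m$ is automatically consistent across every column in row $m$, respecting the CL-RA row-coupling; in the ULA case $M=1$, it collapses to a single value. Substituting $\cos\alpha_m=x_0/\sqrt{x_0^2+(z_0-z_m)^2}$ back into the first equation then yields $(\ref{cos_beta1})$, which depends on the array indices through $r_{m,n}$ only.

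Third, I would verify that the remaining $y$-component equation is automatically satisfied. Using the Pythagorean identity $r_{m,n}^{2}=x_{0}^{2}+(y_{0}-y_{n})^{2}+(z_{0}-z_{m})^{2}$, a short algebraic step gives $s_{\beta_n}^{2}=1-(x_{0}^{2}+(z_{0}-z_{m})^{2})/r_{m,n}^{2}=(y_{0}-y_{n})^{2}/r_{m,n}^{2}$, which matches the required magnitude; a suitable branch choice on $\beta_n$ then selects the correct sign of $(y_n-y_0)/r_{m,n}$. Combining these three steps, all three alignment equations hold at every element, and the boresight $\mathbf{f}(\mathbf{u}_{m,n})$ coincides with $\mathbf{u}_0$ for every $(m,n)$.

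The principal obstacle is the overdetermined nature of the alignment system: a general $M\times N$ array would impose $3MN$ component equations on only $M+N$ shared rotation angles, so perfect per-antenna pointing is typically infeasible. What rescues the ULA case is the combination of $z_m=0$ across the single row with the Pythagorean identity satisfied by $r_{m,n}$, which makes the $y$-component equation automatically compatible with the two angles determined by the $x$- and $z$-components. Making this compatibility transparent, and fixing a sign convention on $\beta_n$ so that the boresight points towards (and not away from) the user regardless of the sign of $y_n-y_0$, is the technical core of the argument.
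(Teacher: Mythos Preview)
Your proposal is correct and follows essentially the same route as the paper's proof: both equate the pointing vector $\mathbf{f}(\mathbf{u}_{m,n})$ with the unit direction $\mathbf{u}_0$ componentwise, solve the $x$- and $z$-components for $\alpha_m$ and $\beta_n$, and then observe that the derived $\beta_n$ carries a hidden $z_m$-dependence which is only compatible with the CL-RA column coupling when $M=1$. Your extra step of explicitly checking the $y$-component via the Pythagorean identity and your remark on the sign branch of $\beta_n$ are useful elaborations, but they do not change the underlying argument.
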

	\begin{proof}
		Please refer to Appendix B.
	\end{proof}
	
	Proposition \ref{pro1} reveals that under the ULA setup, the CL-RA architecture is able to attain equivalent performance to its fully flexible counterpart, while reducing the number of required motors from $2N$ to $N+1$. 
	The same conclusion holds for a $M \times 1$ array, where the rotation sequence is changed (i.e., rotating around the $y$-axis first, followed by the $z$-axis). However, for the UPA array, the limited rotational DoFs restricts the performance of the CL-RA architecture compared to the fully flexible counterpart.
	
	We consider constraint $(\mathrm{\ref{P0-C-RA constraint}})$ under the ULA case, i.e. $1 \times N$, where $m=1$. Since
	${\mathbf{f}}(\mathbf{u}_{m,n}) = \mathbf{R}(\mathbf{u}_{m,n})\bar{\mathbf{f}}_{m,n}$, constraint $(\mathrm{\ref{P0-C-RA constraint}})$ can be converted as $\cos(\theta_{\max}) \le \cos(\alpha_m)\cos(\beta_n) \le 1$. Considering  $\cos(\alpha_m)\cos(\beta_n) \le 1$, the inequality on the right hand is guaranteed. For $\cos(\theta_{\max}) \le \cos(\alpha_m)\cos(\beta_n)$, we have 
	\begin{align}
		\cos(\theta_{\max}) \le \cos(\alpha_m)\cos(\beta_n) = \frac{x_0}{r}.
	\end{align}
	Then, for the antenna $a_{m,n}$, the eccentric angle to direct to the user is expressed as 
	\begin{align}
		\theta_{m,n} = \arctan\left(\frac{\sqrt{(y_0-y_n)^2+(z_0-z_m)^2}}{x_0}\right).
	\end{align}
	When $\theta_{m,n} \le \theta_{\max}$, we have $\epsilon_{m,n} = 0$ and the antenna $a_{m,n}$ can be aligned with the user to achieve the best performance \cite{Zheng2025-2}. When $\theta_{m,n} \ge \theta_{\max}$, $\epsilon_{m,n} = \theta_{m,n} - \theta_{\max}$. Thus, we have
	\begin{align}
		\label{singel-epsilon}
		\cos(\epsilon_{m,n}) = \cos([\theta_{m,n} - \theta_{\max}]^{+}),
	\end{align}
	where $[x]^{+}=\max\{0,x\}$. 
	
	\vspace{-2mm}
	\section{Multi-User Case} \label{Proposed Algorithm}
	In this section, we consider the more general multi-user scenario. Due to the non-convex nature of the problem and coupled constraints, it is challenging to optimize antenna rotation angles and receive beamforming. Therefore, we propose an AO algorithm to address the formulated problem in the following.
	\subsection{Proposed Algorithm for Antenna Optimization}
	In this subsection, we apply an AO algorithm to solve problem $(\ref{P0})$.
	\subsubsection{Receive Beamforming Optimization}  \label{Receive beamforming Optimization}
	With fixed $\mathbf{u}$, we first optimize the receive beamforming $\mathbf{w}_k$. The related sub-problem is given by
	\begin{subequations}
		\label{P1-1}
		\begin{eqnarray}
			&  \max  \limits_{\mathbf{w}_k} 
			\label{P1-1-function}
			&  \sum \limits_{k \in \mathcal{K}} R_k  \\
			\label{P1-1-C-receive power}  
			&  \mathrm{s.t.} &  \Vert \mathbf{w}_k \Vert  =1, \forall k \in \mathcal{K}.
		\end{eqnarray}
	\end{subequations}
	Then, we have 
	\begin{align}
		\gamma_k = \frac{\bar{P}_k\vert \mathbf{w}_k\mathbf{h}_{k} \vert^2}{\sum\nolimits_{i=1, i \neq k}^K \bar{P}_i \vert \mathbf{w}_k\mathbf{h}_{i} \vert^2 + 1}.
	\end{align}
	By exploiting the MMSE beamforming, we can obtain the optimal solution of problem $(\ref{P1-1})$ as
	\begin{align}
		\label{W-MMSE}
		\mathbf{w}_k^{\mathrm{MMSE}} = \frac{\mathbf{C}_k^{-1}\mathbf{h}_k}{\Vert \mathbf{C}_k^{-1}\mathbf{h}_k \Vert},
	\end{align}
	where $\mathbf{C}_k = \sum\nolimits_{i=1, i \neq k}^K \bar{P}_i  \mathbf{h}_{i}\mathbf{h}_{i}^{\mathrm{H}} + \mathbf{I}_{N}$ is the interference-plus-noise-covariance matrix. To reduce the dimension of matrix inversion from $Q \times Q$ to $(K-1) \times (K-1)$, $\mathbf{C}_k^{-1}$ can be converted by applying the Woodbury matrix identity:
	\begin{align}
		\mathbf{C}_k^{-1} &= (\mathbf{I}_{N} + \widetilde{\mathbf{H}}_{k}\mathbf{P}_{k}\widetilde{\mathbf{H}}_{k}^{\mathrm{H}})^{-1} \nonumber \\
		& = \mathbf{I}_{N} -\widetilde{\mathbf{H}}_{k} (\mathbf{P}_{k}^{-1}+\widetilde{\mathbf{H}}_{k}^{\mathrm{H}}\widetilde{\mathbf{H}}_{k})^{-1}\widetilde{\mathbf{H}}_{k}^{\mathrm{H}},
	\end{align}
	where $\widetilde{\mathbf{H}}_{k} = [\mathbf{h}_1,\ldots,\mathbf{h}_{k-1},\mathbf{h}_{k+1},\ldots,\mathbf{h}_{K}]$ and $\mathbf{P}_k = \mathrm{diag}(\bar{P}_1,\ldots,\bar{P}_{k-1},\bar{P}_{k+1},\ldots,\bar{P}_K)$. 

	\subsubsection{Rotation Angle Optimization}
	With fixed $\mathbf{w}_k$, the rotation angles optimization subproblem is given by
	\begin{subequations}
		\label{P2}
		\begin{eqnarray}
			&\!\!\!\!\!\!\!\!\! \max  \limits_{\mathbf{u}} 
			\label{P2-function}
			&\!\!\! \sum \limits_{k \in \mathcal{K}} R_k  \\
			\label{P2-C-RA constraint} 
			&\!\!\!\!\!\!\!\!\! \mathrm{s.t.} &\!\!\!  \cos(\theta_{\max}) \le {\mathbf{f}}_{m,n}^{\mathrm{T}}\mathbf{e}_x \le 1, \forall m \in \mathcal{M}, \forall n \in \mathcal{N}, 
		\end{eqnarray}
	\end{subequations}
	where the objective function $(\mathrm{\ref{P2-function}})$ is non-concave, and constraint $(\mathrm{\ref{P2-C-RA constraint}})$ is non-convex, which makes it challenging to solve problem $(\mathrm{\ref{P2}})$ optimally. We apply the feasible direction method to solve problem $(\mathrm{\ref{P2}})$.
	Denote $\mathbf{u}_{m,n}^{(t-1)} = [\alpha_m^{(t-1)}, \beta_n^{(t-1)}]$ as the rotation solution after iteration $t-1$. Let $\Delta\mathbf{u}_{m,n} = \mathbf{u}_{m,n} - \mathbf{u}_{m,n}^{(t-1)} = [\Delta\alpha_m,\Delta\beta_n], \forall m,n$,
	as the corresponding increments in the $t$-th iteration, where $\mathbf{u}_{m,n}^{(t-1)}$ is the value of $\mathbf{u}_{m,n}$ in the $t-1$ iteration. The terms $\Delta\alpha_m =\alpha_m - \alpha_m^{(t-1)} $ and $\Delta\beta_n =\beta_n - \beta_n^{(t-1)} $. Then, the update for the rotation matrix at the current iteration can be expressed as the product of the rotation matrix from the previous iteration, denoted as $\mathbf{R}(\mathbf{u}_{m,n}^{(t-1)})$, and the incremental rotation matrix, denoted as $\mathbf{R}(\Delta\mathbf{u}_{m,n})$, which is given by
	\begin{align}
		\label{R1}
		\mathbf{R}(\mathbf{u}_{m,n}) =\mathbf{R}(\mathbf{u}_{m,n}^{(t-1)}) \mathbf{R}(\Delta\mathbf{u}_{m,n}),\forall m \in \mathcal{M}, \forall n \in \mathcal{N}, 
	\end{align}
	where $\mathbf{R}(\mathbf{u}_{m,n})$ is a unitary matrix belonging to the orthogonal group.
	Considering that the angle changes $\Delta\mathbf{u}_{m,n}$ are very small in each iteration, we can apply the following small-angle approximations:
	\begin{align}
		\label{cos}
		\cos(x) \rightarrow 1,  
		\sin(x) \rightarrow x,
	\end{align}
	for $x \rightarrow 0$. Substituting $\mathbf{u}_{m,n}$ with $\Delta\mathbf{u}_{m,n}$ and applying the linearization approximations $(\mathrm{\ref{cos}})$, we can obtain the following linear approximation:
	\begin{align}
		\label{R_new}
		\!\!\! \mathbf{R}(\Delta\mathbf{u}_{m,n})  \!\approx\! \mathbf{R}_{\Delta\alpha_m} \mathbf{R}_{\Delta\beta_n}  
		\!\!= \!\!
		\begin{bmatrix}
			1 & \Delta\beta_n & -\Delta\alpha_m \\
			-\Delta\beta_n & 1 & 0 \\
			\Delta\alpha_m & 0 & 1
		\end{bmatrix}.
	\end{align}
	Substituting $(\mathrm{\ref{R1}})$ and $(\mathrm{\ref{R_new}})$ into ${\mathbf{f}}_{m,n}$, we can linearize the non-convex constraint $(\mathrm{\ref{P2-C-RA constraint}})$ as follows:
	\begin{align}
		\label{Constraint1}
		& \!\!\!\!\!\!\!\!\! \cos(\theta_{\max}) \le \bar{\mathbf{f}}_{m,n}^{\mathrm{T}}\mathbf{R}(\Delta\mathbf{u}_{m,n})^{\mathrm{T}}\mathbf{R}(\mathbf{u}_{m,n}^{(t-1)})^{\mathrm{T}}\mathbf{e}_x \le 1.
	\end{align}
	Then, we exploit the feasible direction methods, which starts with a feasible vector $\mathbf{u}^{(0)}$ and generates a sequence of feasible vectors $\mathbf{u}^{(t-1)}$. We get
	\begin{align}
		\mathbf{u}^{(t)} = \mathbf{u}^{(t-1)} + \iota^{(t-1)}(\bar{\mathbf{u}}^{(t-1)} - \mathbf{u}^{(t-1)}),
	\end{align}
	where $\iota^{(t-1)} \in (0,1]$ is the adaptive step size calculated by the Armio rule. Here, $\bar{\mathbf{u}}^{(t-1)}$ is a feasible vector, which can be chosen as the solution to the following optimization problem:
	\vspace{-4mm}
	\begin{subequations}
		\label{P3}
		\begin{eqnarray}
			& \max  \limits_{\mathbf{u}} 
			\label{P3-function}
			& \nabla_{\mathbf{u}} C_1(\mathbf{u}^{(t-1)})^{\mathrm{T}} (\mathbf{u} - \mathbf{u}^{(t-1)}) \\
			\label{P3-C-RA constraint} 
			& \mathrm{s.t.} & (\mathrm{\ref{Constraint1}}),
		\end{eqnarray}
	\end{subequations}
	where $C_1 = \sum \nolimits_{k \in \mathcal{K}} R_k$. The gradient of function $C_1(\mathbf{u}^{(t-1)})$ at the point $\mathbf{u}^{(t-1)}$ is given by
	\begin{align}
		\label{u-gradient}
		&\!\!\! [\nabla_{\mathbf{u}} C_1(\mathbf{u}^{(t-1)})]_j \nonumber \\
		&\!\!\! = \!\lim_{\varepsilon \rightarrow 0 }\frac{C_1(\mathbf{u}^{(t-1)} \!+\! \varepsilon\mathbf{e}^j)\!-\!C_1(\mathbf{u}^{(t-1)})}{\varepsilon},1 \le j \le M+N.
	\end{align}
	where $\mathbf{e}^j \in \mathbb{R}^{M+N}$ is a vector where the $j$-th element is 1, and all other elements are $0$. Note that problem $(\mathrm{\ref{P3}})$ is a linear program and can be optimally solved by using linprog \cite{Rocha2019}. 
	The detailed algorithm for solving problem $(\ref{P3})$ is presented in Algorithm \ref{Algorithm1}.
	
	\begin{algorithm}[t]  
		\caption{Feasible direction algorithm for Solving (\ref{P2})}
		\label{Algorithm1}
		\renewcommand{\algorithmicrequire}{\textbf{Initialize}}
		\renewcommand{\algorithmicensure}{\textbf{Output}}
		\begin{algorithmic}[1]
			\STATE \textbf{Initialize} $\mathbf{u}^{(0)}$, $\rho$, $\upsilon$, step size $\iota_0$, the maximum number of iteration $T_{\max}$, and the predefined threshold $\epsilon_0$.
			\REPEAT
			\STATE Set the iteration index $\tau=1$;
			\STATE Compute the gradient value $\nabla_{\mathbf{u}} C_1(\mathbf{u}^{(\tau-1)})$ based on (\ref{u-gradient}) and set $\iota = \iota_0$;
			\STATE Obtain $\bar{\mathbf{u}}^{(\tau-1)}$ by solving problem (\ref{P3});
			\REPEAT 
			\STATE Compute $\mathbf{u}^{(\tau)} = \mathbf{u}^{(\tau-1)} + \iota(\bar{\mathbf{u}}^{(\tau-1)} - \mathbf{u}^{(\tau-1)})$
			\STATE Update $\iota = \rho \iota$;
			\UNTIL $ C_1(\mathbf{u}^{(\tau)}) -  C_1(\mathbf{u}^{(\tau-1)}) \ge \upsilon\iota \nabla_{\mathbf{u}}  C_1(\mathbf{u}^{(\tau-1)})^{\mathrm{T}}(\bar{\mathbf{u}}^{(\tau-1)} - \mathbf{u}^{(\tau-1)}) $ and $\mathbf{u}^{(\tau)}$ is feasible;
			\STATE Update $\tau=\tau+1$;
			\UNTIL $\nabla_{\mathbf{u}} C_1(\mathbf{u}^{(\tau)})^\mathrm{T}({\mathbf{u}}^{(\tau)} - \bar{\mathbf{u}}^{(\tau)}) \le \epsilon_0$ or $\tau \ge T_{\max}$;
			\STATE \textbf{Output} the converged $\mathbf{u}$.
		\end{algorithmic}
	\end{algorithm}	
	
	\vspace{-3mm}
	\subsection{Proposed Algorithm for Panel Optimization}
	In this subsection, we propose an algorithm similar to that for antenna rotation optimization to address problem $(\ref{P1-0})$.
	The receive beamforming can be obtained by the proposed MMSE method in Section \ref{Receive beamforming Optimization}.
	Then, with fixed $\mathbf{w}_k$, we optimize $\mathbf{u}$.
	The related sub-problem is given by
	\begin{subequations}
		\label{P1-2}
		\begin{eqnarray}
			& \!\!\!\!\!\!\!\!\!\!\!\!\!\!\!\!\! \max  \limits_{ \mathbf{u}} 
			\label{P1-2-function}
			& \!\!\!\!\!\!  \sum \limits_{k \in \mathcal{K}} \widetilde{R}_k  \\
			\label{P1-2-C-RA constraint}  
			& \!\!\!\!\!\!\!\!\!\!\!\!\!\!\!\!\! \mathrm{s.t.} &\!\!\!\!\!\! 	\mathbf{n}(\mathbf{u}_{m,n}^b)^{\mathrm{T}}(\mathbf{q}_j-\mathbf{q}_b) \le 0,  \forall b \in \mathcal{B}, \forall j \in \mathcal{B}, j \neq b,\\
			\label{P1-2-C-RA signal leakage} 
			&&\!\!\!\!\!\!
			\mathbf{n}(\mathbf{u}_{m,n}^b)^{\mathrm{T}}\mathbf{q}_b \ge 0,\forall b \in \mathcal{B}.
		\end{eqnarray}
	\end{subequations}
	Similarly, to tackle constraints $(\mathrm{\ref{P1-2-C-RA constraint}})$ and $(\mathrm{\ref{P1-2-C-RA signal leakage}})$, we have 
	\begin{align}
		\label{P1-2-Constraint1}
		&\bar{\mathbf{n}}^{\mathrm{T}}\mathbf{R}(\Delta\mathbf{u}_{m,n}^b)^{\mathrm{T}}\mathbf{R}(\mathbf{u}_{m,n}^{b,(t-1)})^{\mathrm{T}}(\mathbf{q}_j - \mathbf{q}_b) \le 0, \forall m, \forall n,\\
		\label{P1-2-Constraint2}
		& \bar{\mathbf{n}}^{\mathrm{T}}\mathbf{R}(\Delta\mathbf{u}_{m,n}^b)^{\mathrm{T}}\mathbf{R}(\mathbf{u}_{m,n}^{b,(t-1)})^{\mathrm{T}}\mathbf{q}_b \ge 0, \forall m, \forall n.
	\end{align}
	Let $\mathbf{u}^{(t)} = \mathbf{u}^{(t-1)} + \iota^{(t-1)}(\bar{\mathbf{u}}^{(t-1)} - \mathbf{u}^{(t-1)})$,
	where $\bar{\mathbf{u}}_{m,n}^{(t-1)}$ is a feasible vector and can be chosen as the solution to the following optimization problem:
	\begin{subequations}
		\label{P1-3}
		\begin{eqnarray}
			& \max  \limits_{\mathbf{u}} 
			\label{P1-3-function}
			& \nabla_{\mathbf{u}} C_2(\mathbf{u}^{(t-1)})^{\mathrm{T}} (\mathbf{u} - \mathbf{u}^{(t-1)}) \\
			\label{P1-3-C-RA constraint} 
			& \mathrm{s.t.} & (\mathrm{\ref{P1-2-Constraint1}}), (\mathrm{\ref{P1-2-Constraint2}}),
		\end{eqnarray}
	\end{subequations}
	where $C_2 = \sum \nolimits_{k \in \mathcal{K}} \widetilde{R}_k$. The gradient of function $C_2(\mathbf{u}^{(t-1)})$ at the point $\mathbf{u}^{(t-1)}$ is given by
	\begin{align}
		&\!\!\!\![\nabla_{\mathbf{u}} C_2(\mathbf{u}^{(t-1)})]_j \nonumber \\
		&\!\!\!\!=\! \lim_{\varepsilon \rightarrow 0 }\frac{C_2(\mathbf{u}^{(t-1)} \!+\! \varepsilon\mathbf{e}^j)\!-\! C_2(\mathbf{u}^{(t-1)})}{\varepsilon},1 \le j \le M+N.
	\end{align}
	Problem $(\mathrm{\ref{P1-3}})$ is a linear optimization problem, which can be solved by using linprog.
	
	\vspace{-2mm}
	\subsection{Convergence and Complexity Analysis}
	The overall algorithm for solving problem (\ref{P0}) is presented in Algorithm \ref{Algorithm2}, where the receive beamforming $\mathbf{w}_k$ and rotation angle $\mathbf{u}$ are alternately optimized until convergence is achieved. The objective value of (\ref{P0}) is non-decreasing under the alternating updates of $\mathbf{w}_k$ and $\mathbf{u}$. Moreover, since the original problem (\ref{P0}) is upper-bounded by a finite value, the convergence of the proposed AO algorithm is ensured. For the computational complexity of the proposed algorithm, the complexity of the MMSE-based beamforming scheme is dominated by the matrix inversion, which is $\mathcal{O}(KQ^3)$. For the rotation angles optimization method, the computation of the gradients in $(\ref{u-gradient})$ has a complexity of $\mathcal{O}(K^2Q)$. Thus, the overall complexity is given by $\mathcal{O}(KQ(Q^2+K))$. The convergence and complexity analysis of the proposed algorithm for panel rotation optimization is similar to the antenna rotation optimization, which is omitted here.
	
	\begin{algorithm}[t]  
		\caption{AO Algorithm for Solving Problem (\ref{P0})}
		\label{Algorithm2}
		\renewcommand{\algorithmicrequire}{\textbf{Initialize}}
		\renewcommand{\algorithmicensure}{\textbf{Output}}
		\begin{algorithmic}[1]
			\STATE \textbf{Initialize} $\mathbf{w}_k^{(0)}$ and $\mathbf{u}^{(0)}$, the maximum number of iteration $T_{\max}$, and the predefined threshold $\epsilon_1$.
			\REPEAT 
			\STATE Set the iteration index $\tau=0$;
			\STATE Given $\mathbf{u}^{(\tau)}$, update $\mathbf{w}_k^{(\tau + 1)}$ by calculating $(\mathrm{\ref{W-MMSE}})$;
			\STATE Given $\mathbf{w}_k^{(\tau + 1)}$, update $\{\mathbf{u}^{(\tau)}$ by solving problem (\ref{P3});
			\STATE Update $\tau=\tau+1$;
			\UNTIL $\vert C_1^{(\tau)} - C_1^{(\tau-1)} \vert \le \epsilon_1$ or $\tau \ge T_{\max}$;
			\STATE \textbf{Output} the converged solutions $\mathbf{w}_k$ and $\mathbf{u}$.
		\end{algorithmic}
	\end{algorithm}	
	
	\vspace{-3mm}
	\section{Discrete rotation angels}
	In practical implementation, continuous rotation angles adjustment is challenging due to the limited flexibility of the mechanical- and electronical-components. This motivates us to consider the discrete rotation angels design in this section. 
	Specifically, we assume that the discrete rotations sets for $\alpha$ and $\beta$ are given by $\mathcal{L}_{\alpha}=[1,2,\cdots,L_{\alpha}]$ and $\mathcal{L}_{\beta}=[1,2,\cdots,L_{\beta}]$, respectively. Then, the $(l_1,l_2)$-th discrete rotation at the $m$-th row and $n$-th column of the BS, i.e., antenna $a_{m,n}$, is given by $\mathbf{u}_{m,n}^l=[\alpha_{m}^{l_1},\beta_{n}^{l_2}]$, where $l_1 \in \mathcal{L}_{\alpha}$ and $l_2 \in \mathcal{L}_{\beta}$. 
	The related discrete rotation angle for antenna rotation optimization problem is given by
	\begin{subequations}
		\label{P5-1}
		\begin{eqnarray}
			& \!\!\!\!\!\!\!\!\!\!\!\!\!\!\!\!\! \max  \limits_{\mathbf{w}_k, \mathbf{u}} 
			\label{P5-1-function}
			& \!\!\!  \sum \limits_{k \in \mathcal{K}} R_k  \\
			\label{P5-1-C-Discrete constraint} 
			&\!\!\!\!\!\!\!\!\!\!\!\!\!\!\!\!\! \mathrm{s.t.} &\!\!\! (\mathrm{\ref{P0-C-receive power} }), (\mathrm{\ref{P0-C-RA constraint} }), \alpha_m \in \mathcal{L}_{\alpha}, \beta_n \in \mathcal{L}_{\beta}, \forall m, \forall n.
		\end{eqnarray}
	\end{subequations}
	The receive beamforming $\mathbf{w}_k$ can be similarly obtained by MMSE method. 
	However, since the variable $\mathbf{u}_{m,n}$ is discrete and coupled, problem (\ref{P5-1}) is still difficult to solve. Considering that the rotation angles $\alpha_m$ and $\beta_n$ are strongly coupled in determining the pointing vector of each antenna $a_{m,n}$, and their relationship with the antenna pointing vector is nonlinear. Simply rounding each continuous angle to the nearest discrete value can lead to a joint beamforming direction that deviates significantly from the optimum. Moreover, the resulting configuration may fail to satisfy the mechanical rotation constraints specified in $(\mathrm{\ref{P0-C-RA constraint} })$.
	To address this, we adopt a discrete genetic algorithm (GA), which is designed to directly search for optimal solutions in discrete parameter spaces. This enables effective and constrained optimization of the antenna rotation angles in the following process.
	
	The main steps of the discrete genetic algorithm include initialization, selection, crossover, and mutation.
	
	\begin{itemize}
		\item \textbf{Initialization:} 
		A population of potential solutions is created randomly. Specifically, the chromosome consists of two parts: the first $M$ genes represent discrete indices for $\alpha_m$ (selected from set $\mathcal{L}_{\alpha}$), and the next $N$ genes represent discrete indices for $\beta_n$ (selected from set $\mathcal{L}_{\beta}$).
		
		\item \textbf{Selection:} 
		Candidate solutions are evaluated based on a fitness function $F(\mathbf{x})$. For each chromosome, the discrete values of $\alpha_m$ and $\beta_n$ are first decoded, then the geometric constraints are checked. The fitness function measures the degree of constraint satisfaction, prioritizing solutions that satisfy more constraints. The best-performing solutions are selected as parents for the next generation using tournament selection.
		
		\item \textbf{Crossover:} 
		Due to the discrete nature of the problem, a two-point crossover method is employed: two crossover points are randomly selected, and the gene segments between these points are exchanged between parent chromosomes. Crossover is performed with probability $p_c$; otherwise, the parents are copied directly.
		
		\item \textbf{Mutation:} 
		Mutation operates on each gene with probability $p_m$: for genes representing $\alpha_m$, a new index is randomly selected from $\mathcal{L}_{\alpha}$ (different from the current value); for genes representing $\beta_n$, a new index is randomly selected from $\mathcal{L}_{\beta}$. This ensures that the mutated angle values remain within the discrete sets.
	
	\end{itemize}
	
	Let $P$, $G_{\max}$, and $\eta$ denote the population size, the maximum number of generations, and the selection pressure, respectively.
	The algorithm iterates through these steps until the termination criterion is met.
	The fitness function $F(\mathbf{x})$ is expressed as
	\begin{eqnarray}
		\! F(\mathbf{x}) \!= \!
		\begin{cases}
			C(\mathbf{x}), \!\!& \text{if rotation constraint $(\mathrm{\ref{P0-C-RA constraint}})$ is satisfied} , \\
			\xi N_v, \!\!& \text{otherwise},
		\end{cases}
	\end{eqnarray}
	where $\mathbf{x}=[\alpha_1,\cdots, \alpha_M,\beta_1,\cdots, \beta_N]$ represents the discrete rotation angles in the $g$-th generation. Besides, $C(\mathbf{x}) = \sum \nolimits_{k \in \mathcal{K}} R_k$. Here $\xi < 0$ and $N_v$ denote the penalty coefficients and the number of violations of the constraints, respectively. For the optimization of the discrete angles of the panel rotation, we can apply the similar algorithm, which is omitted here. 
	This genetic algorithm efficiently explores the discrete solution space of size $L^{M+N}$, leveraging the parallel evaluation of multiple candidate solutions to find optimal or near-optimal configurations of $\alpha_m$ and $\beta_n$ angles that satisfy the geometric constraints while respecting the mechanical limitations of the system.

	\vspace{-4mm}
	\section{Simulation Results} \label{Simulation Results}
	\vspace{-1mm}
	\begin{table*}[t]
		\centering
		\caption{Simulation Parameters}
		\scalebox{0.9}{\begin{tabular}{c c} 
				\toprule
				\bf{Parameter}   & \bf{Value}  \\ 
				\midrule
				Number of users, antennas, and scatters   & $K=6$, $Q=64$, and $D=8$\\
				Inter-antenna spacing   & $\Delta = \lambda / 2$ \\
				Number of vertical and horizontal rotating tracks for antenna optimization & $M = 8$ and $N=8$\\
				the array occupation ratio and the antenna directivity factor & $\zeta = 1$ and $p=2$ \\
				Maximum zenith angle & $\theta_{\max} = \pi/6$ \\
				Number of blocks and the antennas at each block & $B=4$ and $Q_b=16$ \\
				Number of vertical and horizontal rotating tracks for panel optimization & $M_B = 2$ and $N_B=2$ \\
				Noise power at the BS and the maximum transmit power of each user& $\sigma^2 = -80~\mathrm{dBm}$ and $P_{\max}=10~\mathrm{dBm}$ \\
				Genetic algorithm parameters & $P = 200$, $G_{\max}=100$, $p_c = 0.8$, $p_m=0.1$, and $\eta = 2$ \\
				\bottomrule
		\end{tabular}}
		\label{Para}
		\vspace{-3mm}
	\end{table*}
	
	In this section, we present simulation results to evaluate the performance of the CL RA-enabled system. We assume that the system operates at a frequency of $3.5$ GHz, corresponding to a wavelength of $\lambda =0.0857$ m. We consider a global Cartesian coordinate system with the origin at the BS. Users are distributed within a circular area with a distance of $50 \sim 70$ meters from the BS reference center.
	Other parameters are given in Table \ref{Para}.
	To evaluate the system performance, the following schemes are considered:
	\begin{figure}[t]
		\centering
		\includegraphics[width=0.41
		\textwidth]{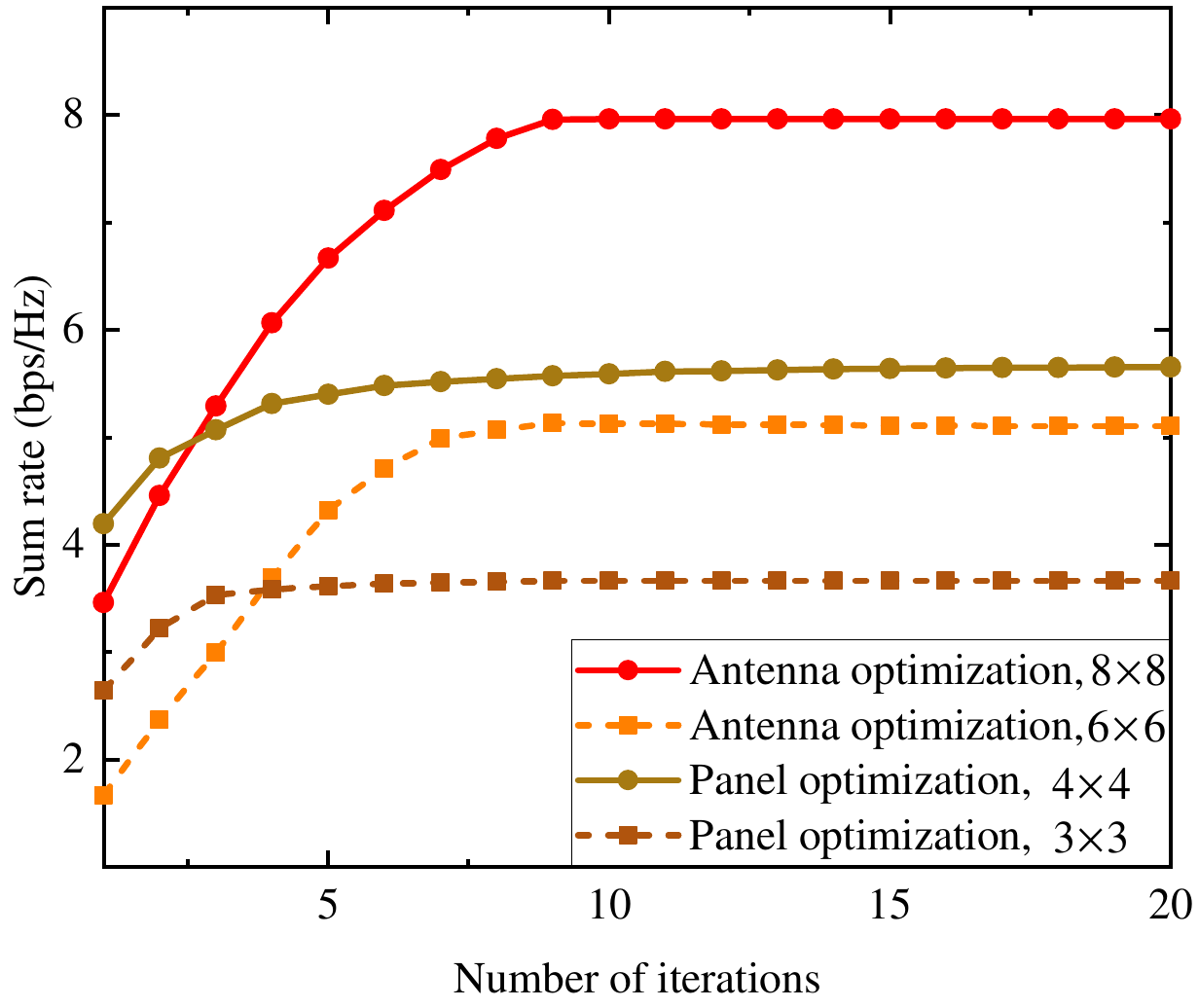}
		\caption{Convergence behavior of the proposed algorithms.}
		\label{Fig2}
		\vspace{-5mm}
	\end{figure}
	
	\begin{itemize}
		\item \textbf{Flexible antenna orientation}: The orientation of each RA can be adjusted independently, and the MMSE method is applied at the BS.
		\item \textbf{Flexible panel orientation}:  The rotation angel of each panel can be adjusted independently. The BS utilizes the MMSE method for receive beamforming.
		\item \textbf{Random orientation design}: The orientation of each RA is randomly generated within the rotation ranges given by $(\mathrm{\ref{theta_constarint}})$, and the MMSE method is applied at the BS.
		\item
		\textbf{Array-wise orientation design}:  
		We adjust the orientation of the entire antenna array instead of that of each antenna element by setting $B=1$, and perform MMSE at the BS.
		\item 
		\textbf{Fixed orientation design}: The orientations of all RAs are fixed at their reference orientations, i.e., $\mathbf{f}_{m,n}=\mathbf{e}_x$. The MMSE method is applied at the BS.
		\item \textbf{Baseline with isotropic antennas}: The antenna elements in the array are isotropic, i.e., $p=0$, and the radiation energy is evenly distributed in the front half-space of the antennas. The MMSE method is applied at the BS.
	\end{itemize}
	
	\vspace{-3mm}
	\subsection{Convergence of Proposed Algorithm}
	Fig. \ref{Fig2} illustrates the convergence behavior of the proposed algorithms, i.e., antenna optimization and panel optimization. In the panel schemes, each array consists of $B=2 \times 2$ panels. In the $4 \times 4$ panel scheme, each panel contains $4 \times 4$ antennas, where $Q=64$. In the $3 \times 3$ panel scheme, each panel contains $3 \times 3$ antennas, where $Q=36$. It is observed that the sum rate increases over iterations and converges within $10$ iterations in all cases. This indicates that the proposed AO algorithms converge quickly and demonstrate their effectiveness. Moreover, a larger array results in a greater sum rate owing to its larger aperture and higher directivity. 
	In general, antenna optimization schemes outperform their antenna panel-level counterparts. The key advantage lies in their utilization of a higher number of DoFs, which allows for dynamic and precise beamforming adjustments.

	\vspace{-3mm}
	\subsection{System Throughput Versus Transmit Power of Each User}
	\begin{figure}[t]
		\centering
		\includegraphics[width=0.41
		\textwidth]{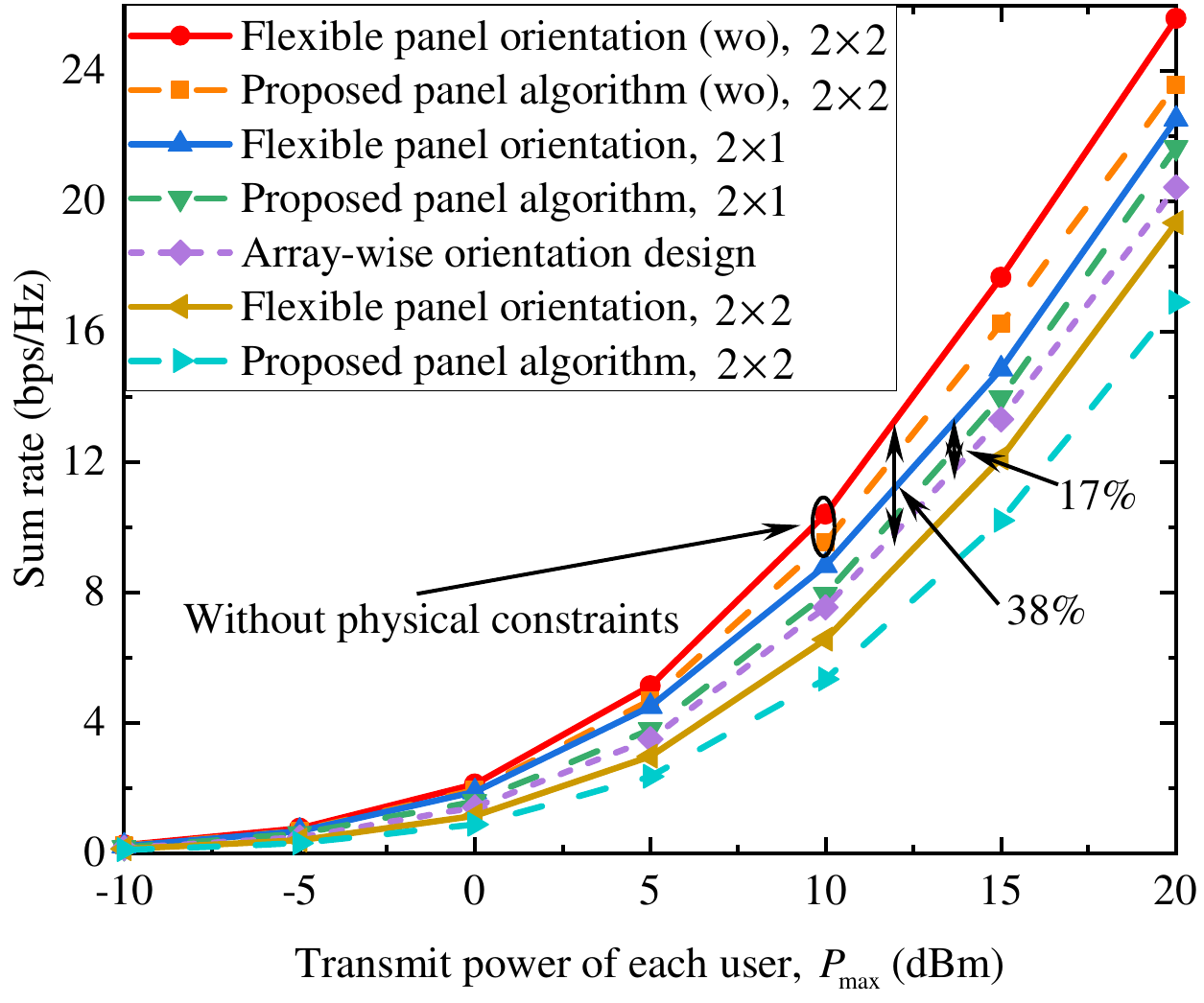}
		\caption{Sum rate vs. the maximum transmit power of each user.}
		\label{Fig3-1}
		\vspace{-5mm}
	\end{figure}
	In Fig. \ref{Fig3-1}, the flexible panel orientation (wo) and the proposed panel algorithm (wo) schemes indicate that panel orientation is optimized without considering practical physical constraints. We present different panels partition schemes, i.e., $2 \times 1$ and $2 \times 2$, where $Q=64$. It is observed that without any physical constraints, the performance of flexible panel orientation and proposed panel algorithm is superior to that of the array-wise orientation design scheme. This is reasonable because the panel partition provides larger DoFs, where the rotation angle of each panel can be independently adjusted to align with the desired direction. Moreover, although the CL structure restricts some rotational DoFs, it delivers a favorable performance-complexity trade-off, leading to an overall performance improvement. Specifically, the flexible panel orientation (wo) shows a performance improvement of 38\% compared to the array-wise orientation design scheme when $P_{\mathrm{max}}=10~\mathrm{dBm}$. 
	Despite strict physical constraints that limit the rotational DoFs and thereby reduce the feasible range for optimization, the $2 \times 1$ panel rotation scheme still achieves a $17\%$ performance gain compared to the array-wise orientation design scheme. 
	However, as the number of panel increases, the performance decreases. Particularly, the $2 \times 2$ panel rotation scheme is inferior to the array-wise orientation design scheme.

	\begin{figure}[t]
		\centering
		\includegraphics[width=0.41
		\textwidth]{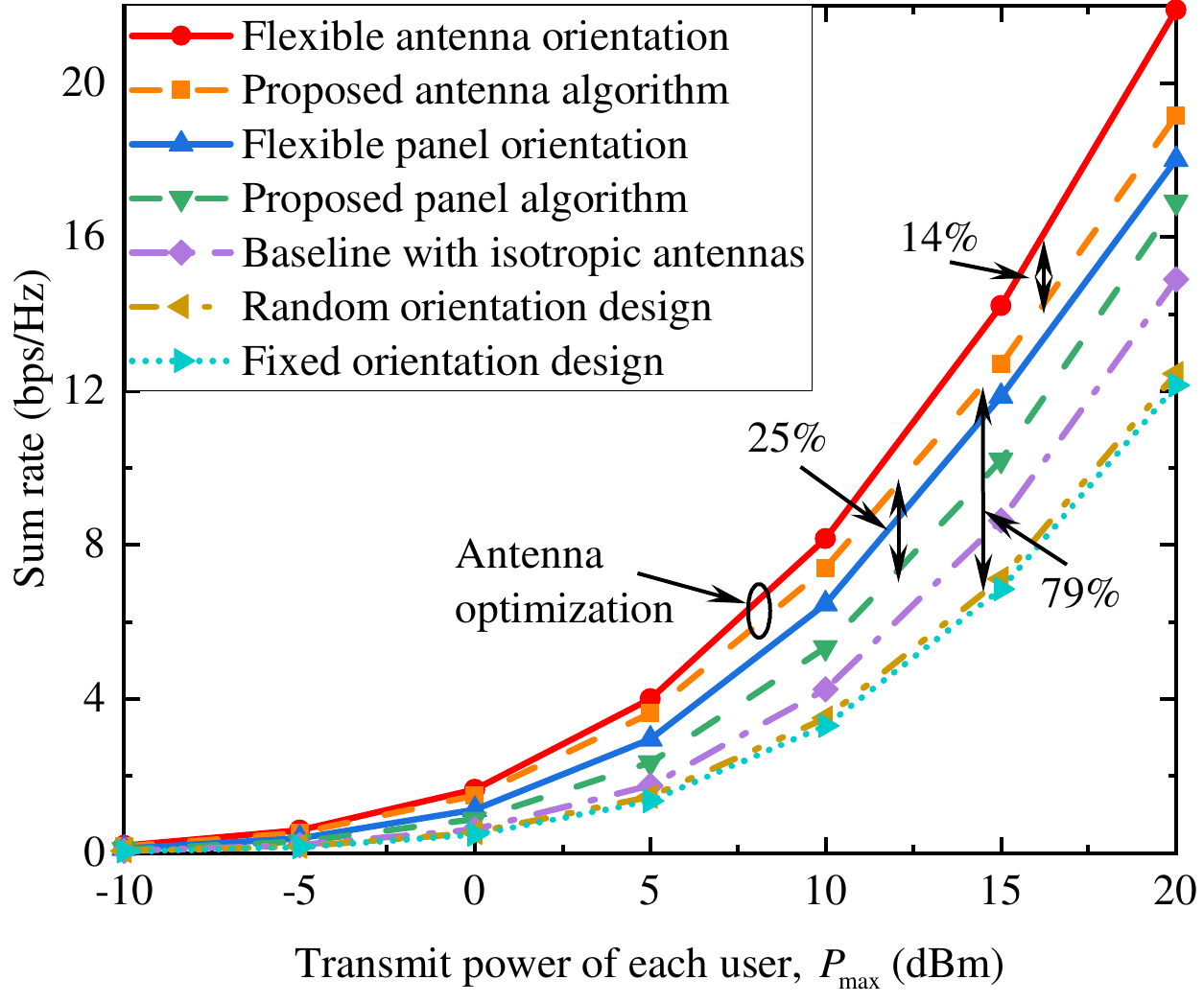}
		\caption{Sum rate vs. the maximum transmit power of each user.}
		\label{Fig3}
		\vspace{-5mm}
	\end{figure}
	In Fig. \ref{Fig3}, we compare the proposed antenna/panel algorithm with different baselines. Obviously, the sum rate of all schemes increases as the transmit power $P_{\max}$ increases. Besides, the flexible antenna orientation scheme consistently achieves the highest sum rate across the all power levels. As $P_{\max}$ increases, the performance gap between the flexible antenna orientation and the proposed antenna algorithm widens. This is because the increase of $P_{\max}$ leads to the interference-limited operation and the flexible antenna orientation support more flexibly adjustments to steer main lobes toward intended users. Particularly, the performance gap is $14\%$ when $P_{\max}=20~\mathrm{dBm}$. Moreover, the proposed antenna algorithm is superior to the proposed panel algorithm. This is primarily because the proposed antenna algorithm employs element-wise rotation, though binding per row and column, enabling finer beam control.
	When $P = 20~\mathrm{dBm}$, the performance gain of the proposed antenna algorithm surpasses that of the proposed panel algorithm by $25\%$.
	The fixed and random orientation designs exhibit the lowest performance, highlighting the limitation of static or unoptimized configurations. Specifically, at $P_{\max}=15~\mathrm{dBm}$, the proposed antenna algorithm achieves a performance gain $79\%$ higher than that of the random orientation design.

	\vspace{-3mm}
	\subsection{System Throughput Versus Maximum Zenith Angle}
	\begin{figure}[t]
		\centering
		\includegraphics[width=0.41
		\textwidth]{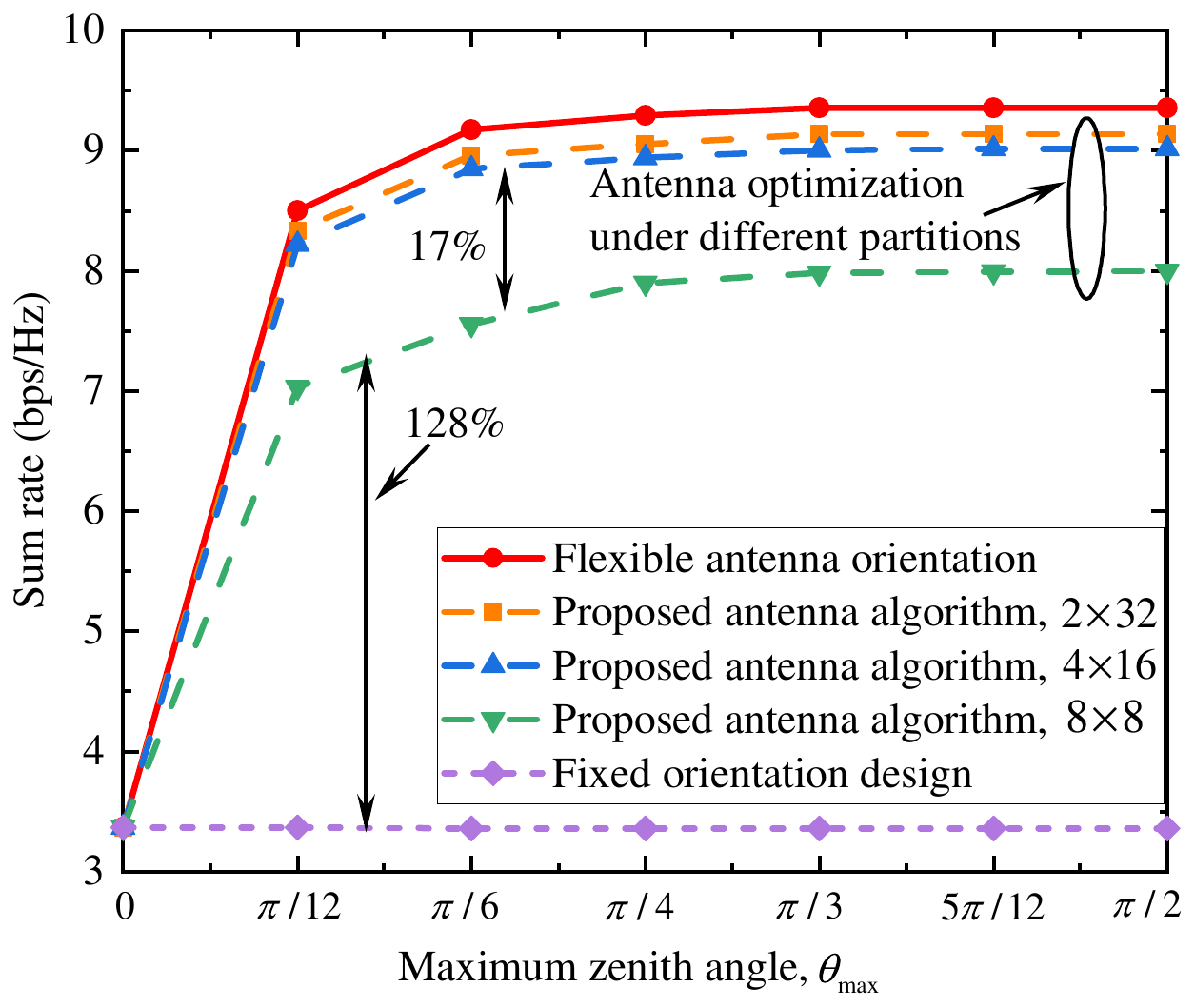}
		\caption{Sum rate vs. the maximum zenith angle.}
		\label{Fig5}
		\vspace{-5mm}
	\end{figure}
	
	In Fig. \ref{Fig5}, it is no doubt that the flexible antenna orientation exceeds all counterparts around the entire range. When $\theta_{\max} = 0$, the RA schemes yield equivalent performance to the fixed orientation design. Surprisingly, a slight increase of $\theta_{\max}$ to $\pi/12$ leads to a dramatic improvement in the performance of RA schemes, where the proposed antenna algorithm, equipped with an $8 \times 8$ array, achieves a $128\%$ higher performance gain than the fixed orientation design at this point. This result indicates that despite a limited rotation range for RA orientation/boresight adjustment, the proposed RA system can still achieve significant performance gains through optimized pointing directions. Moreover, it can be observed that the performance of all RA schemes tends to saturate at  $\theta_{\max} = \pi/4$. 
	
	Across different array configurations, it can be observed that the $2 \times 32$ array achieves the highest performance, with the $4 \times 16$ array closely following at a minimal gap. The $8  \times 8$ array exhibits the worst performance, showing a performance deficit of $17\%$ compared to the $4 \times 16$ array. This can be explained as follows. Compared to the two counterparts, the $2 \times 32$ array gains more DoFs and flexibility to balance the array directional gain over the multipath channels, thus bringing a further performance increase. However, with more motors to support rotation, the hardware cost rises consistently. The performance gap between the flexible antenna orientation and $4 \times 16$ array is small, with a mere $0.2~\mathrm{bps/Hz}$ difference. This reveals that with an appropriate array partitioning strategy, the proposed antenna algorithm can achieve performance comparable to that of the flexible antenna orientation scheme, while simultaneously reducing deployment complexity and energy consumption.
	
	\vspace{-3mm}
	\subsection{System Throughput Versus Directivity factor}
	\begin{figure}[t]
		\centering
		\includegraphics[width=0.41
		\textwidth]{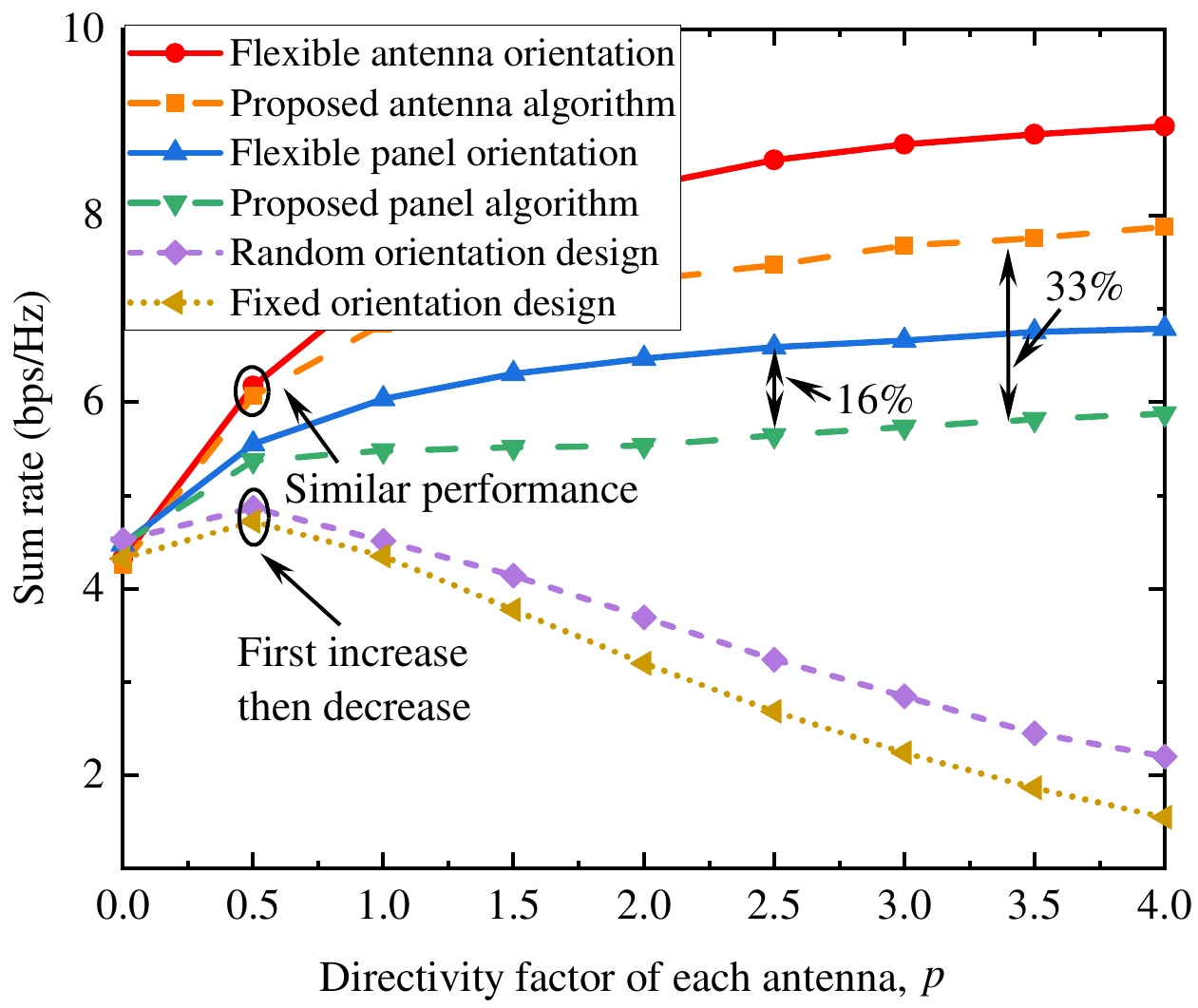}
		\caption{Sum rate vs. the directivity factor of each antenna.}
		\label{Fig6}
		\vspace{-5mm}
	\end{figure}
	
	As observed in Fig. \ref{Fig6}, the performance of all RA schemes improves as the directivity factor $p$ increases. It is reasonable that a larger $p$ leads to a higher directional gain in the antenna's boresight direction and a narrower main lobe towards intended users. The performance gap between the flexible antenna orientation and the proposed antenna algorithm widens with the increase of $p$. This is because the former can enhance directional gain for multiple users by exploiting the narrow  main lobe and reducing mutual interference, thus achieving higher system performance. However, constrained by the limited DoFs caused by the bound row and column rotation, the proposed antenna algorithm struggles to achieve precise beam alignment for multiple users. This limitation consequently leads to slower performance growth.
	Similarly, when $p \le 0.5$, the proposed panel algorithm exhibit the same performance as the flexible panel orientation.
	However, with the increase of $p$, the performance gain brought by the higher rotational DoFs occurs. Specifically, compared to the proposed panel algorithm, the proposed antenna algorithm and flexible panel orientation exhibit sum rate gains of $33\%$ and $16\%$, respectively.
	
	In addition, the fixed orientation design decreases with $p$ when $p \ge 0.5$.
	This is because as the directivity factor $p$ increases, the radiation power becomes more concentrated in the region directly in front of the array. 
	Consequently, users deviating from the main lobe experience weaker directional gains.
	Furthermore, the random orientation design is outperformed by the proposed RA schemes. This is because despite its ability to disperse the array's radiation power in multiple directions, it lacks the strategic resource allocation to improve communication performance for all users. The above results demonstrate the necessity of the proposed RA system for enhancing channel capacity, particularly critical when dealing with highly directive antennas with narrow main lobes.
	
	\vspace{-3mm}
	\subsection{Continuous Versus Discrete Angle Rotation}
	\begin{figure}[t]
		\centering
		\includegraphics[width=0.41
		\textwidth]{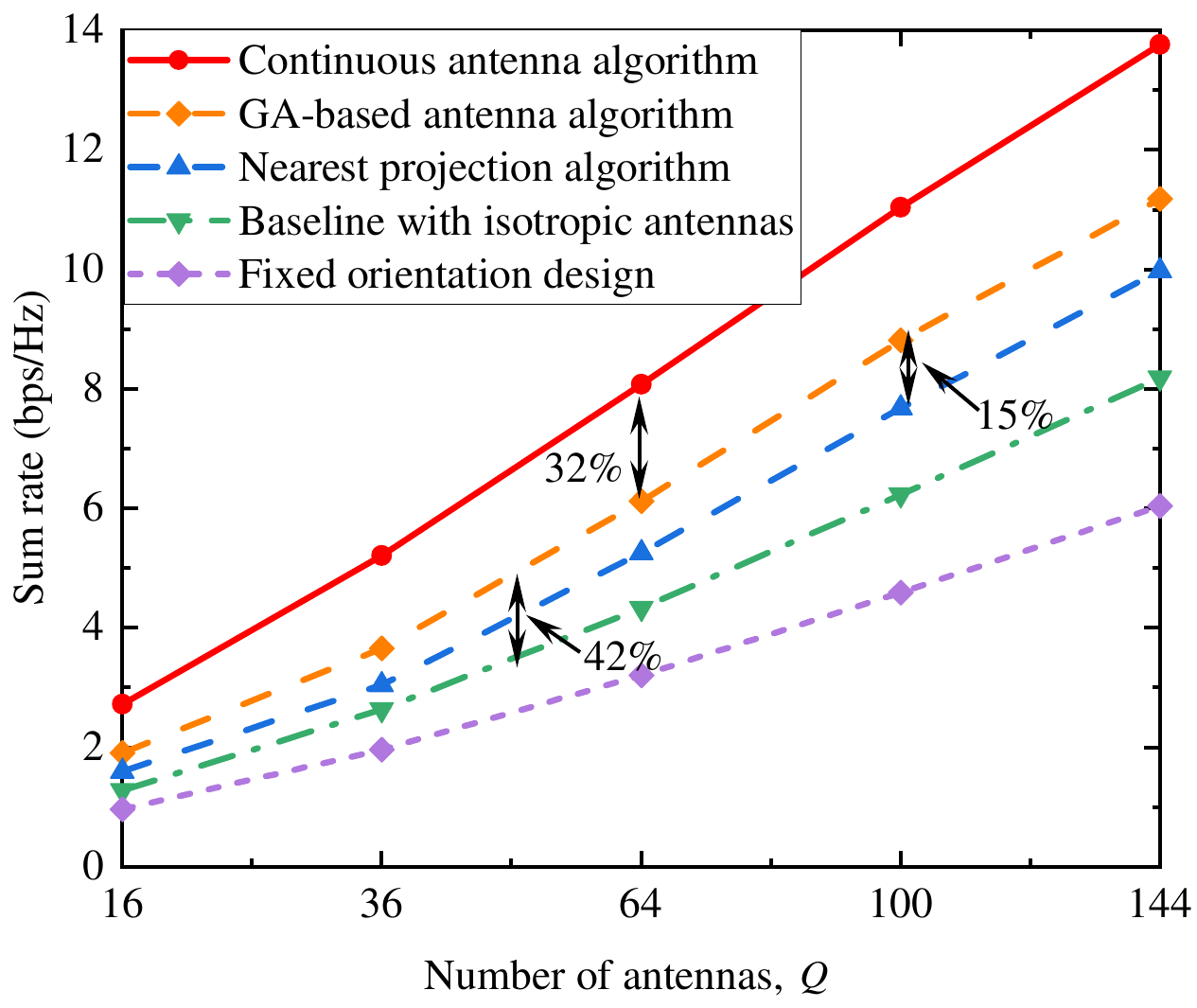}
		\caption{Sum rate vs. the number of antennas.}
		\label{Fig4}
		\vspace{-5mm}
	\end{figure}

	Fig. \ref{Fig4} shows the performance of different discrete rotation angle selection algorithms with $L=L_{\alpha}=L_{\beta} = 15$. 
	Specifically, the nearest projection algorithm is defined as mapping the continuous optimum to its closest feasible discrete point while adhering to the mechanical rotation limits.
	As expected, increasing $Q$ improves the sum rate due to higher array gains.
	The continuous antenna algorithm outperforms the GA-based antenna algorithm by 32\% at $Q=64$. This advantage stems from the fact that the continuous solution space is inherently larger and less constrained than its discrete counterpart, enabling more precise and globally optimal configurations.
	Moreover, compared with the nearest projection algorithm, the GA-based antenna algorithm achieves a 15\%  performance gain when $Q=100$. This demonstrates the effectiveness of the GA in searching the discrete space.  
	Unlike nearest projection, which is confined to a local neighborhood and ignores variable coupling, the GA performs a joint global optimization over all orientations. This allows it to escape poor local optima and find a better-coordinated discrete solution that nearest projection cannot reach.
	Furthermore, the baseline with isotropic antennas performs consistently worse across the entire range, underscoring the importance of directional gain and intelligent orientation control in achieving higher spectral efficiency. Notably, the GA-based antenna algorithm achieves a $42\%$ improvement over this baseline.

	\begin{figure}[t]
		\centering
		\includegraphics[width=0.41
		\textwidth]{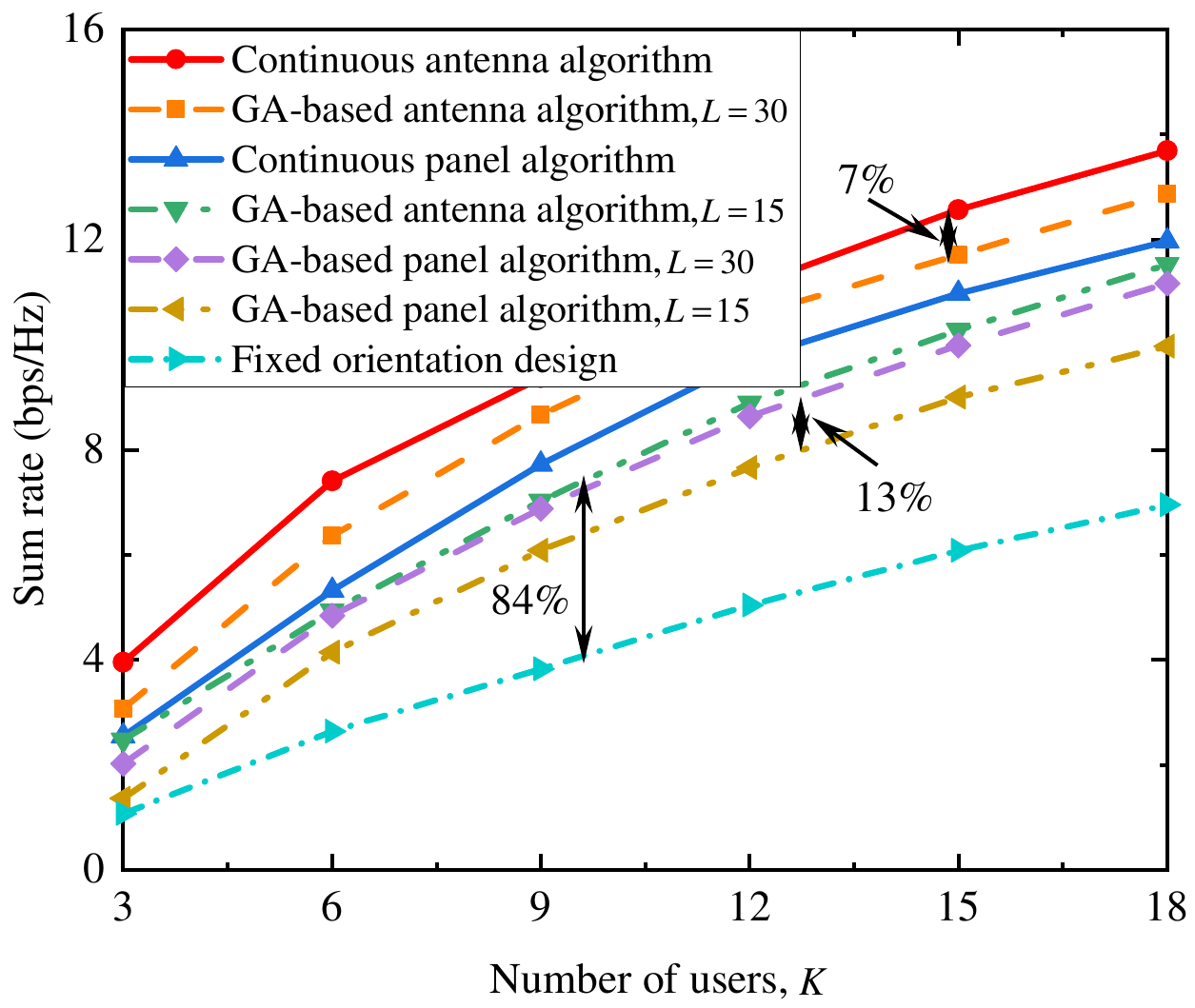}
		\caption{Sum rate vs. the number of users.}
		\label{Fig7}
		\vspace{-5mm}
	\end{figure}
	
	In Fig. \ref{Fig7}, we evaluate the sum rate versus the number of users $K$ under different discrete rotation configurations, where $L=L_{\alpha}=L_{\beta}$. As expected, the sum rate grows with $K$ across all schemes. Continuous rotation algorithms consistently outperform their discrete counterparts. For instance, the continuous antenna algorithm achieves a $7\%$ higher sum rate than the GA-based antenna algorithm with $L=30$.
	Moreover, the performance of the GA-based antenna/panel algorithm increases as $L$ increases. This is reasonable because 
	a larger rotation angle set will provide more spatial DoFs for tuning the antenna/panel into the desired directions, thereby enhancing the array gains. Specifically, the GA-based panel algorithm with $L=30$ outperforms the version with $L=15$ by 13\%.
	Furthermore, even with a coarse discretization ($L=15$), the GA-based antenna algorithm still surpasses the fixed orientation design by $84\%$. Additionally, this performance gap widens as $K$ increases, highlighting the system's ability to support multi-user scenarios through adaptive beam alignment. These results underscore the effectiveness of RA-based systems in dense user environments.

	\vspace{-2mm}
	\section{Conclusions} \label{Conclusions}
	In this paper, we investigated a CL-RA architecture to reduce the hardware costs and control complexity related to the conventional RA systems. 
	We considered a CL-RA-aided uplink multi-user system, establishing system models for both antenna element-level and antenna panel-level rotation schemes. To maximize the system throughput, we formulated a joint optimization problem for the receive beamforming and rotation angles.
	We first revealed that the CL-RA structure can achieve the same performance as the flexible rotation scheme for the antenna element-level rotation in a single-user scenario.
	Subsequently, for the multi-user case under both rotation architectures, we developed an AO algorithm to address the formulated problem, where the MMSE and feasible direction methods were exploited to obtain the receive beamforming and rotation angles, respectively. Additionally, we addressed the practical discrete angle selection problem via a genetic algorithm. Simulation results validated the effectiveness of the CL-RA architecture. Despite the limited rotational DoFs, by designing the row-column partition, the proposed algorithm can achieve performance similar to that of the flexible orientation scheme with significantly reduced hardware costs. Moreover, due to strict physical constraints,
	increasing the number of panels reduced their rotational DoFs, ultimately degrading performance.
	In addition, the performance of the proposed antenna scheme was 25\% and 128\% higher than that of the proposed panel scheme and the fixed orientation design, respectively.

	\appendices
	\section*{{Appendix A:} \hspace{0.1cm} {Proof of Proposition \ref{pro2}}}  \label{app2}
	The outward normal vector of the panel $b$ is given by
	\begin{align}
		\mathbf{n}(\mathbf{u}_{m,n}^b) =\mathbf{R} (\mathbf{u}_{m,n}^b)\bar{\mathbf{n}}=[c_{\alpha_m}c_{\beta_n} \ -s_{\beta_n} \ s_{\alpha_m} c_{\beta_n} ]^\mathrm{T}.
	\end{align}
	For physical constraint $\mathbf{n}(\mathbf{u}_{m,n}^b)^\mathrm{T}(\mathbf{q}_j-\mathbf{q}_b) \le 0$, we have  
	\begin{align}
		\mathbf{q}_j-\mathbf{q}_b = [0 \ (n_j-n)\Delta \ (m_j-m)\Delta].
	\end{align}
	Then, we can get
	\begin{align}
		&\mathbf{n}(\mathbf{u}_{m,n}^b)^\mathrm{T}(\mathbf{q}_j-\mathbf{q}_b) \nonumber \\
		&= [c_{\alpha_m}c_{\beta_n} \ -s_{\beta_n} \ s_{\alpha_m} c_{\beta_n} ][0 \ (n_j-n)\Delta \ (m_j-m)\Delta]^\mathrm{T} \nonumber \\
		& = -s_{\beta_n}(n_j-n)\Delta+s_{\alpha_m} c_{\beta_n}(m_j-m)\Delta \le 0.
	\end{align}
	Finally, we obtain
	\begin{align}
		\label{Constr1-1}
		-s_{\beta_n}(n_j-n)+s_{\alpha_m} c_{\beta_n}(m_j-m) \le 0.
	\end{align}
	For panel $b$, it should satisfy constraint $(\ref{Constr1-1})$ for any panel $j$, where $ \forall j \neq b, j \in \mathcal{B}$.
	For constraint $\mathbf{n}(\mathbf{u}_{m,n}^b)^\mathrm{T}\mathbf{q}_b \ge 0$, we transform it as 
	\begin{align}
		-s_{\beta_n}n+s_{\alpha_m} c_{\beta_n}m \ge 0 \rightarrow s_{\alpha_m} c_{\beta_n}m \ge s_{\beta_n}n.
	\end{align}
	
	\vspace{-1mm}
	\section*{{Appendix B:} \hspace{0.1cm} {Proof of Proposition \ref{pro1}}} \label{app1}
	When ${\mathbf{f}}(\mathbf{u}_{m,n}) = \mathbf{u}_0$, we have
	\begin{align}
		&c_{\alpha_m}c_{\beta_n} = \frac{x_0}{r}, 
		-s_{\beta_n} = \frac{y_0-y_{n}}{r},
		s_{\alpha_m} c_{\beta_n} = \frac{z_0-z_{m}}{r},
	\end{align}
	where $r=\sqrt{x_0^2+(y_0-y_n)^2+(z_0-z_m)^2}$.
	Thus, since $\sin(\beta_n)=\frac{y_n-y_0}{r}$, we can obtain
	\begin{align}
		\label{cos_alpha}
		& \cos(\alpha_m) = \frac{x_0}{\sqrt{x_0^2+(z_0-z_m)^2}},\\
		\label{cos_beta}
		&  \cos(\beta_n) = \frac{\sqrt{x_0^2+(z_0-z_m)^2}}{r}.
	\end{align}
	When constraint $(\mathrm{\ref{P0-C-RA constraint}})$ is not taken into account, 
	we can obtain the optimal $\alpha_m$  and $\beta_n$ from $(\mathrm{\ref{cos_alpha}})$ and $(\mathrm{\ref{cos_beta}})$, respectively. However, it is observed that $\beta_n$ is related to $z_m$, which implies that the optimal rotation angle $\beta_n$ for each antenna located in the $n$-th column is not the same. Thus, we consider $M=1$, i.e., $z = z_m$. In this case, each RA can tune its antenna orientation towards the user, thereby achieving  the best performance. 
	
	\vspace{-3mm}
	\begin{spacing}{0.97}
		\bibliographystyle{IEEEtran}
		\bibliography{thesis1}
	\end{spacing}
	
\end{document}